\documentclass[11pt, a4paper]{amsart}
\usepackage{amsfonts}
\usepackage{amssymb}
\usepackage{amsmath}
\usepackage[dvips]{graphics}
\usepackage{epsfig}
\usepackage[section]{placeins}
\usepackage{euscript}
\usepackage{color}
\usepackage{xcolor}
\usepackage{tikz} \usetikzlibrary{patterns}
\usepackage{cancel}
\usepackage{fancyvrb}

\usepackage[pdftex]{hyperref}

\definecolor{edgeblue}{RGB}{21,101,192}
\definecolor{bandgrey}{RGB}{78,78,84}
\definecolor{gammapurple}{RGB}{106,27,120}

 \newtheorem{thm}{Theorem}[section]
 \newtheorem{cor}[thm]{Corollary}
 \newtheorem{lemma}[thm]{Lemma}
 \newtheorem{prop}[thm]{Proposition}
 \theoremstyle{definition}
 \newtheorem{defn}[thm]{Definition}
 \newtheorem{rem}[thm]{Remark}

 \newtheorem{assumption}[thm]{Assumption}

 \numberwithin{equation}{section}

\newcommand{\vb}[1]{\boldsymbol{#1}}
\newcommand{\ket}[1]{|{#1}\rangle}
\newcommand{\bra}[1]{\langle{#1}|}

\newcommand{\Zee}{\mathbb{Z}}
\newcommand{\Cee}{\mathbb{C}}
\newcommand{\Tee}{\mathbb{T}}
\newcommand{\real}{\mathbb{R}}
\newcommand{\ft}[1]{\widehat{#1}}

\newcommand{\bbC}{{\mathbb C}}

\newcommand{\bbF}{{\mathbb F}}

\newcommand{\bbN}{{\mathbb N}}

\newcommand{\bbT}{{\mathbb T}}

\newcommand{\bbZ}{{\mathbb Z}}

\newcommand{\Ker}[1]{\mathrm{Ker} #1 }

\newcommand{\str}{^{*}}

\newcommand{\tr}{\mathrm{tr}}

\newcommand{\teop}{\hfill$\square$}

\newcommand{\be}{\begin{equation}}
\newcommand{\ee}{\end{equation}}
\newcommand{\bea}{\begin{eqnarray}}
\newcommand{\eea}{\end{eqnarray}}
\newcommand{\beann}{\begin{eqnarray*}}
\newcommand{\eeann}{\end{eqnarray*}}

\begin{document}

\title{Free Fermion frustration-free models have zero Hall conductance}

\author{Sven Bachmann}
\address{Department of Mathematics, The University of British Columbia, 1984 Mathematics Road, Vancouver, BC, V6T 1Z2 Canada}
\email{sbach@math.ubc.ca}

\author{Zhiqian Du}
\address{Department of Mathematics, University of California Davis, One Shields Ave, Davis, CA 95616}
\email{duzhiqian@gmail.com}

\author{Martin Fraas}
\address{Department of Mathematics, University of California Davis, One Shields Ave, Davis, CA 95616}
\email{mfraas@ucdavis.edu}

\date{\today}

\begin{abstract}
We give a new proof that translation-invariant free Fermion frustration-free models have zero Hall conductance. Our proof is analytic in nature and it uses the bulk-edge correspondence. The central argument, which holds beyond the translation-invariant situation, is that in frustration-free models, edge modes cannot merge into the ground state band. We give an example of a gapped frustration-free model that is nonetheless gapless on the half-plane.
\end{abstract}

\maketitle

\section{Introduction}
The quantum Hall effect remains the paradigmatic experimental example of a topological phase of matter. The quantization and stability of quantum Hall conductance are explained by linking it to a topological invariant of a ground state phase \cite{TKNN, A.S.S.ChernNumber, HastingsMichalakis, bachmann2020many}. Furthermore, the fractional quantum Hall effect has anyonic excitations \cite{Frohlich1, Frohlich2, Frohlich3}. The braiding and fusion properties of these anyons have the structure of a modular tensor category, providing another topological invariant \cite{bachmann2025tensor}. In fact, it is conjectured that the modular tensor category describing anyonic excitations, together with the chiral central charge (which is a generalization of the Hall conductance) classify all 2D phases of matter \cite{Wen_Classification}. Yet, the theory of fractional quantum Hall effect contains a puzzle -- we do not know a single microscopic model where we can prove that the ground state has fractional Hall conductance. This should be contrasted with a zoo of exactly solvable models \cite{levin2005string}, see also \cite{BolsKjaer2025} that realize modular tensor categories of any quantum double.

We believe that a resolution of this puzzle is that all topological exactly solvable models that we know are finite range and frustration-free. In fact, usually they are not only frustration-free but the local Hamiltonians are also commuting. The Hall conductance of a frustration-free ground state of a commuting Hamiltonian model is zero \cite{kapustin2020local, zhang2022vanishing}. It is a natural conjecture that, in fact, the Hall conductance in any finite range frustration-free model is zero. We do not know how to prove this conjecture in general and we focus on the case of free Fermion models. We show
\begin{thm}
The Hall conductance of translation-invariant free Fermion frustration-free ground state vanishes. 
\end{thm}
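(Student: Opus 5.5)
The plan follows the route announced in the abstract: use the bulk--edge correspondence, and show that frustration-freeness forbids edge modes from merging into the ground state band.

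\emph{Setup.} A translation-invariant free Fermion model on $\Zee^2$ with finitely many internal degrees of freedom is a finite-range one-particle Hamiltonian $H=\sum_x h_x$. We take each local term $h_x\geq 0$ to be a finite-rank, finite-range operator. Frustration-freeness means that the ground state projection $P$ (the Fermi projection below the gap, with ground state energy $0$) satisfies $h_x P=0$ for every $x$, and that $H$ has a gap $[0,\gamma)$ above $0$ in the bulk. The Hall conductance equals the Chern number of $P$, and by the bulk--edge correspondence it also equals the spectral flow, i.e.\ the net number of edge channels crossing the gap, of the half-plane truncation $\hat H=\sum_{x:\,\mathrm{supp}\,h_x\subset \Zee\times\Zee_{\geq0}} h_x$.

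\emph{Key step.} Since $\hat H$ is a sum of positive terms, $\hat H\ge0$. Moreover $\hat H\leq H$ in the sense of quadratic forms, when $\hat H$ is viewed on $\ell^2(\Zee\times\Zee_{\geq 0})\subset\ell^2(\Zee^2)$ with the dropped terms removed. Fourier transforming along the edge, we obtain a family $\hat H(k)$, $k\in\Tee$, of half-line operators. Its essential spectrum is contained in the bulk spectrum of the bands restricted to $k$, and therefore lies in $\{0\}\cup[\gamma,\infty)$.

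The bulk--edge correspondence then gives the following: a nonzero Chern number forces an edge eigenvalue branch $E(k)$ of $\hat H(k)$ that connects the bulk band at $0$ with the upper band, sweeping through $(0,\gamma)$. We show that this cannot happen, namely that no eigenvalue branch can approach $0$ from above. This is the main obstacle.

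\emph{Approach to the main obstacle.} Suppose $\hat H(k_n)\psi_n=E_n\psi_n$ with $E_n\downarrow 0$ and $\|\psi_n\|=1$. Frustration-freeness gives the identity $\langle\psi,\hat H\psi\rangle=\sum_x\|h_x^{1/2}\psi\|^2$. The idea is to show that $E_n\to 0$ forces $\psi_n$ to be nearly annihilated by every $h_x$. The vector $\psi_n$ is then approximately a frustration-free ground state of the half-line problem. Because kernels of finite-range positive operators are determined by local constraints, which are polynomial or transfer-matrix equations in the edge direction, the half-line kernel at each $k$ is a space of dimension locally constant in $k$ and is spanned by solutions of the bulk local constraints. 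For these, the bulk gap gives a lower bound: any vector orthogonal to the true kernel has energy at least $c>0$, uniformly in $k$. I would prove this by a finite-volume detectability (Knabe-type or local gap) argument using only the $h_x$ supported away from the boundary. The key point is that there is a uniform threshold $c$ such that the spectrum of $\hat H(k)$ in $(0,c)$ is empty. To get this, I would analyze the analytic structure: the kernel of $\hat H(k)$ is the intersection of the kernels of finitely many matrices analytic in $z=e^{ik}$. Its dimension is therefore constant except at finitely many points, and the eigenvalues of $\hat H(k)$ near $0$ are continuous. A branch entering $0$ would thus change the kernel dimension at a point. Upper semicontinuity combined with frustration-freeness, through the identity $\ker\hat H(k)=\bigcap_x\ker h_x(k)$, should forbid the branch from reaching $0$ from above, except at finitely many $k$.

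\emph{Conclusion.} Given the gap $(0,c)$, every edge branch stays away from the ground state band. The spectral flow across an energy level $\mu\in(0,c)$ is then zero. By the bulk--edge correspondence the edge index equals the bulk Chern number, so the Hall conductance vanishes.
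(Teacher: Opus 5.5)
There is a genuine gap. The step you call the key point is false: in general there is no $c>0$ with $\sigma(\hat H(k))\cap(0,c)=\emptyset$ for all $k$. Edge branches \emph{can} approach $0$ from above.

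Section~\ref{sec:example} of the paper gives a gapped F$^4$ model with $N=M=2$, $\ft{v}_1=(1+2X)e$ and $\ft{v}_2=(1+3X+XY)e$. Its bulk gap is $2$, yet its edge branch $E(k_2)=(2+2\cos k_2)/(14+6\cos k_2)$ touches $0$ tangentially at $k_2=\pi$. The same example refutes your proposed Knabe-type bound. For $k_2\neq\pi$ the edge eigenvector lies in the $e$-component, so it is orthogonal to the half-line kernel $\ell^2(\mathbb N)\otimes e^\perp$, yet its energy tends to $0$.

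Your semicontinuity heuristic actually predicts this rather than forbidding it. The kernel dimension jumps up at isolated momenta exactly where a branch comes down to zero; here this is the extra zero mode $\psi_n=(-1/2)^{n-1}\psi_1$ at $k_2=\pi$. Your own caveat ``except at finitely many $k$'' is also incompatible with a uniform gap. A branch vanishing at a single $k_*$ takes values in $(0,c)$ nearby, for every $c>0$.

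What is true, and what suffices, is the weaker statement of Lemma~\ref{BranchDef}: branches cannot \emph{terminate} at $0$. Near every point $(k_*,0)$ of $\bar\Gamma$, the set is a finite union of graphs of nonnegative real-analytic functions defined on a full neighbourhood of $k_*$. A branch that touches zero therefore comes back up and contributes no net crossing. The degree of $(k,E)\mapsto E$ is then constant on $(0,\gamma)$ and equal to its value $0$ below zero.

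Proving this requires handling the fact that $0$ lies in the essential spectrum of the edge fibre, whose kernel is infinite-dimensional. So neither Kato's theory nor a kernel-dimension count applies directly at $E=0$. The paper gets around this as follows.
\begin{itemize}
\item It introduces the auxiliary fibre $\ft{H}_a(k)=\ft{H}_e(k)\oplus L(k)$, obtained from the bulk fibre $\ft{H}(k)$ by deleting the terms with $x_1=0$. This is a finite-rank perturbation of $\ft{H}(k)$, and frustration-freeness gives $\ft{H}_a(k)P(k)=0$.
\item It conjugates by the analytic Kato transport $U(k)$, which satisfies $U(k)^*P(k)U(k)=P(0)$, and restricts to $\mathrm{Ran}(1-P(0))$. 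The result is a nonnegative analytic family whose spectrum below $\gamma$ is purely discrete, \emph{including at $0$}.
\item Analytic perturbation theory together with nonnegativity then forces every branch reaching $0$ either to vanish identically or to have an isolated zero and continue through it.
\item The edge branches are among these branches, because $\ft{H}_e(k)$ is a direct summand of $\ft{H}_a(k)$.
\end{itemize}
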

A formal version of the theorem is Theorem~\ref{thm:edge}. The formal version separates the edge and the bulk index. Under some generic conditions, it is well-known fact~\cite{GrafPorta} that the edge index, see Definition~\ref{def:Edge Index}, and the bulk index are equal. That Hall conductance vanishes in these systems is known. It was first proved in \cite{ChenTrivalFlatBand} with alternative proofs given later in \cite{Dubail_2015, read2017compactly}. The latter paper \cite{read2017compactly} also shows that the $\mathbb{Z}_2$ index is trivial. These works study the bulk index and use techniques of homological algebra and $K$-theory. In contrast, we study the edge index and give an analytical proof that only uses Kato's perturbation theory. Using the edge picture, the proof of triviality of $\mathbb{Z}_2$ index is exactly the same as the proof of vanishing of the Hall conductance. 

Let us elaborate a little more about the previous works. In \cite{ChenTrivalFlatBand} it is shown that flat bands have zero Chern number. A flat ground state band has frustration-free parent Hamiltonian and vice versa, see Proposition~\ref{prop:flat_band}. So the results of \cite{ChenTrivalFlatBand} imply our statement but are somewhat more general because they also show triviality of bands that are not ground state bands. The article \cite{Dubail_2015} gives a no-go theorem for the existence of Gaussian tensor network representations of the ground state. Such tensor network representations imply the existence of frustration-free parent Hamiltonian. Finally, \cite{read2017compactly} studies the existence of compactly supported Wannier functions, and it includes an excellent discussion of how this implies and generalizes the previous results. In addition, we should mention that~\cite{sathe2025topological} considers Hamiltonians, possibly many-body, whose energy levels are all flat, though the exact connection to our work is less clear. The reader might notice that these works and our work have all mutually distinct motivations and framing. We believe that this shows the richness of the problem.

As we were finishing our work, we learned about the series of works by Masaoka, Ono and co-authors \cite{masaoka2026frustration, ono2026frustration}. They study general theory of frustration-free free Fermion systems and the discussion in Section~\ref{sec:FFFF} overlaps with their work. In fact, their work is more general as they only assume that the many-body Hamiltonian is frustration-free after subtracting the ground state energy, while we always assume that this energy is zero.

The key to our method is an almost trivial observation that for frustration-free edge Hamiltonians, there are no edge currents along an edge that respects the frustration-free decomposition. This holds for both single-body and many-body Hamiltonians; however by itself it does not prove the statement because Hall conductance is proportional to the variation of the edge current with chemical potential, not to the edge current itself. Changing the chemical potential might destroy the frustration-free decomposition. In translation invariant free Fermion systems, edge currents are related to the existence of edge states, and we prove our result by showing that these edge states cannot emerge from the ground state band.

A natural question is whether the edge states can touch the ground state band. We give an example showing that this can indeed happen, providing a frustration-free model that is gapped in the bulk, has vanishing Hall conductance but nonetheless has a gapless boundary. The only previous example was given in \cite{bachmann2014product}, but these so-called PVBS states have gapless boundaries only at a very specific angle of the edge with respect to the square lattice.

In the following section we explain our setting and the main results. The section concludes with a number of remarks on the time-reversal invariant case, comparisons with explicit models, and the robustness of the edge argument, in particular its validity for disordered and interacting models. In Section~\ref{sec:edge} we analyze the edge states and prove our main result. In Section~\ref{sec:example} we give an example of a free Fermion frustration-free Hamiltonian with gapless boundary. Finally, for completeness, we reproduce in Section~\ref{sec:bulk} the proof that the Chern number of the ground state band is zero using the techniques of \cite{Dubail_2015}.


\section{Frustration-free free Fermions}
\label{sec:FFFF}
We consider the lattice $\mathbb{Z}^2$ equipped with the $\ell^\infty$ metric, and denote $\vb{e}_1 = (1,0), \, \vb{e}_2 = (0,1)$. The single particle Hilbert space is $\mathcal{H} = \ell^2(\mathbb{Z}^2, \mathbb{C}^N)$ for some $N\in \mathbb{N}$. The components of $\Cee^N$ are called spin components. For ${\vb{x}} \in \mathbb{Z}^2$, let $\delta_{\vb{x}}$ denote the normalized function that vanishes everywhere but on site ${\vb{x}}$. We consider a finite range Hamiltonian $H$, meaning that the $N\times N$-matrix $\langle \delta_{\vb{x}},\, H \,\delta_{\vb{y}}\rangle$ vanishes if $|{\vb{x}} -{\vb{y}}|$ is larger than some constant. After second quantization, the Hilbert space is the fermionic Fock space $\mathcal{F} = \Gamma(\mathcal{H})$ and the second quantized Hamiltonian\footnote{Since this part of the text only serves the purpose of motivating the single particle setting, we ignore technical details on the precise meaning to give to these formal second quantized expressions.} is
$$
d \Gamma(H) = \sum_{{\vb{x}},{\vb{y}},j,k} H_{{\vb{x}},{\vb{y}},j,k} {a}_{{\vb{x}},j}^* {a}_{{\vb{y}},k},
$$
where ${a}_{{\vb{x}},j}$ is an annihilation operator on site ${\vb{x}}$ with on-site spin components labeled by $j$. The Hamiltonian is called frustration-free if there exists a decomposition 
$$
d \Gamma(H) = \sum_{{\vb{x}}} \Phi_{\vb{x}}
$$
such that (i) the interactions are non-negative $\Phi_{\vb{x}} \geq 0$, (ii) $\Phi_{\vb{x}}$ have uniformly finite range  and (iii) the kernels of $\{\Phi_{\vb{x}}:\vb{x}\in\mathbb{Z}^2\}$ have nonzero intersection. In the context of free Fermion systems we will also assume that $\Phi_{\vb{x}} = d \Gamma(H_{\vb{x}})$ for some single particle Hamiltonian $H_{\vb{x}}$. With this, the frustration-free condition can be rephrased in terms of the single particle Hamiltonian.
\begin{assumption}[F$^4$ Hamiltonian]
A self-adjoint operator $H$ on $\mathcal{H}$ is called a frustration-free free Fermion Hamiltonian if there exists a decomposition $H = \sum_{{\vb{x}} \in \mathbb{Z}^2} H_{\vb{x}}$ such that:
\begin{enumerate}
    \label{AssumptionF4}
\item For all ${\vb{x}} \in \mathbb{Z}^2$, $H_{\vb{x}} \geq 0$,
\item There exists $r>0$, such that $H_{\vb{x}} \delta_{\vb{y}}=0 $ if $|{\vb{x}}- \vb{y}| > r$,
\item $\cap_{{\vb{x}} \in \mathbb{Z}^2} \mathrm{Ker} H_{\vb{x}} \setminus \{0\} \neq \emptyset$.
\end{enumerate}
We note that given i., iii.~is equivalent to $H$ possessing a ground state with zero energy.
\end{assumption}
To connect to the many-body framework, let $P$ be the orthogonal projection onto the subspace $ \Ker{H} = \cap_{{\vb{x}} \in \mathbb{Z}^2} \mathrm{Ker} H_{\vb{x}}$. Then $P$ or any nonzero sub-projection of $P$ corresponds to a quasi-free ground state of $d \Gamma(H)$. The only scenario that will be interesting for us is when $P$ is infinite dimensional, in which case $d \Gamma(H)$ has infinitely many ground states. This is in particular the case in the translation-invariant setting, see~below. Specifically, we will consider the `fully occupied' ground state, namely the one corresponding to the full projection $P$. In conjunction with the following gap assumption this is equivalent to having the Fermi energy in the gap between zero and the rest of the spectrum.
\begin{assumption}[Spectral gap]
There exists $\gamma >0$ such that $\sigma(H) \subseteq \{0\} \cup (\gamma, \infty)$.
\end{assumption}
Notice that $\gamma$ is not unique for a gapped $H$, and we may use any $\gamma$ within an interval for the convenience of argument.\\
It is important to note that in the F$^4$ setting there is no natural way, at least not known to the authors, how to adjust the chemical potential. Indeed, the transformation $d \Gamma(H) \to d \Gamma(H) -  \mu \sum_{{\vb{x}},j} a_{{\vb{x}},j}^* a_{{\vb{x}},j}$ corresponds to $ H \to H - \mu$, and $H - \mu$ does not satisfy the F$^4$ condition if $H$ does. The new Hamiltonian is not positive with zero in the spectrum, and even if this condition could be relaxed, there is no reason why $H - \mu$ would possess a frustration-free decomposition. In view of this, one should rather think of the assumptions as selecting a chemical potential for which the Hamiltonian is frustration-free and gapped, but this might be a single point and the F$^4$ property is not stable under perturbations.

The conditions i.~and ii.~of the F$^4$ assumption imply that there is $M\in\mathbb{N}$ such that for any ${\vb{x}} \in \mathbb{Z}^2$, there exists $M_{\vb{x}}\in\mathbb{N}$ with $M_{\vb{x}} \leq M$, and compactly supported vectors $\ket{v_{j,{\vb{x}}}}$, $j=1, \dots, M_{\vb{x}}$,  such that
\begin{equation*}
H_{\vb{x}} = \sum_{j=1}^{M_{\vb{x}}} \ket{v_{j,{\vb{x}}}} \, \bra{v_{j,{\vb{x}}}}.
\end{equation*}
The vectors $\ket{v_{j,{\vb{x}}}}$ are not orthogonal to each other nor normalized but they are all non-zero. We can obtain another valid frustration-free decomposition of $H$ by relabeling the ${\vb{x}}$ and from now on we will assume that ${\vb{x}}$ in $\ket{v_{j,{\vb{x}}}}$ is the lower-left point in the support of $\ket{v_{j,{\vb{x}}}}$.
\begin{figure}[h]
    \centering
  \resizebox{3cm}{3cm}
  {
    \begin{tikzpicture}[line cap=round,line join=round]

    \definecolor{myred}{RGB}{235,85,85}
    \definecolor{myblue}{RGB}{25,120,220}

    \foreach \x/\y in {
        0/0,1/0,2/0,3/0,
        0/1,3/1,
        0/2,3/2,
        0/3,1/3,2/3,3/3
    }{
        \fill[black] (\x,\y) circle[radius=0.06];
    }

    \foreach \x/\y in {1/1,2/1,1/2,2/2}{
        \fill[myred] (\x,\y) circle[radius=0.07];
    }

    \draw[myblue, line width=1pt] (1.1,1) -- (1.9,1);
    \draw[myblue, line width=1pt] (1,1.1) -- (1,1.9);
    \draw[myblue, line width=1pt] (1.1,1.1) -- (1.9,1.9);
    \draw[myblue, line width=1pt] (1.1,1.9) -- (1.9,1.1);

    \node[myred, below left=1pt] at (1,1) {$\vb{x}$};
    \node[myblue, above=1pt] at (1.5,0.9) {\tiny $H_{\vb{x}}$};

    \end{tikzpicture}%
   }
    \caption{The term $H_{\vb{x}}$ describes the interaction between four (coarse-grained) sites, shown in blue lines. The subscript $\vb{x}$ refers to the lower left site of support of $H_{\vb{x}}$. The support of the $\ket{v_{j,{\vb{x}}}}$ is shown in red. }
    \label{fig:VSupport}
\end{figure}
Additionally, by coarse graining, i.e. grouping sites together into a site with more spin components, we can assume that the interaction range is one, see Figure~\ref{fig:VSupport}. In other words,
\begin{equation}
\label{eq:grain}
\ket{v_{j,{\vb{x}}}} = \ket{\delta_{\vb{x}}} a_{j,{\vb{x}}}  + \ket{\delta_{{\vb{x}} + e_1}} b_{j,{\vb{x}}} + \ket{\delta_{{\vb{x}} + e_2}} c_{j,{\vb{x}}} + \ket{\delta_{{\vb{x}} + e_1 + e_2}} d_{j,{\vb{x}}},
\end{equation}
where $a_{j,{\vb{x}}},b_{j,{\vb{x}}},c_{j,{\vb{x}}},d_{j,{\vb{x}}}$ are vectors in $\mathbb{C}^N$ indexed by $j,{\vb{x}}$.

\subsection{Edge Hamiltonian}
The edge system is defined on $\mathcal{H}_e = \ell^2(\mathbb{N} \times \mathbb{Z}, \mathbb{C}^N)$. The edge Hamiltonian $H_e$ is obtained from $H$ by dropping all terms $H_{\vb{x}}$ on the left hand side of the edge,
\begin{equation}
\label{eq:H_edge}
H_e = \sum_{{\vb{x}} \in \mathbb{N} \times \mathbb{Z}} H_{\vb{x}}.
\end{equation}
Notice that the edge Hamiltonian acts on $\mathcal{H}_e$ by our labeling convention. It satisfies the F$^4$ conditions but it is not necessarily gapped, as we will see in Section~\ref{sec:example}.
\subsection{Translation-invariant models}
 Let $T_{\vb{x}}$ be the standard representation of translations of $\mathbb{Z}^2$ on $\mathcal{H}$. The F$^4$ model is translation-invariant if
\begin{equation}
    \label{eq:H_as_sum_of_v}
 H=\sum_{\vb{x}\in \Zee^2}\sum_{j=1}^M T_{\vb{x}} \ket{v_j}\bra{v_j} T_{\vb{x}}^{-1},
\end{equation}
namely the vectors $a_{j,{\vb{x}}},b_{j,{\vb{x}}},c_{j,{\vb{x}}},d_{j,{\vb{x}}}$ in (\ref{eq:grain}) are $\vb{x}$ independent, and $\ket{v_j}=\ket{v_{(0,0),j}}$ are some vectors supported near the lattice origin. We immediately remark that in archetypal magnetic situations such as the Harper model, translation invariance after coarse-graining is possible only if the magnetic flux in a cell is rational. 

The Fourier transform 
$$
\mathcal{F} \, : \,\mathcal{H} \to L^2(\Tee^2, \tfrac{d\vb{k}}{4\pi^2};\mathbb{C}^N) \quad, \quad f \mapsto \widehat{f}(\vb{k}) = \sum_{\vb{x}\in \Zee^2} e^{i \vb{k} \vb{x}} f(\vb{x}).
$$
gives
\begin{equation}\label{eq:NN interaction}
\widehat{v}_{\vb{x},j}(\vb k) =  e^{i\vb{k} \vb{x}} \left( a_j + e^{ik_1} b_j + e^{i k_2} c_j + e^{i (k_1 + k_2)} d_j \right).
\end{equation}
The Hamiltonian $\widehat{H} = \mathcal{F} H \mathcal{F}^{-1}$ acts as an $N\times N$ matrix-valued function of $\vb{k}$, $\widehat{H\psi}\;(\vb{k}) = \widehat{H}(\vb{k}) \widehat{\psi}(\vb{k})$, with
\begin{multline}\label{eq: H of k}
\widehat{H}(\vb{k}) =  \sum_{j = 1}^M \widehat{v}_{0,j}(\vb{k}) \widehat{v}_{0,j}(\vb{k})^* 
=  \sum_{j=1}^M \Big(a_j a_j^* + b_j b_j^* + c_j c_j^* + d_j d_j^* +  \\ + e^{i k _1} (b_j a_j^*  +d_j c_j^*) + e^{-i k_1} (a_j b_j^* + c_j d_j^*) +  e^{i k_2}(c_j a_j^* + d_j b_j^*) + \dots\Big).
\end{multline}

We will not need the explicit expression for $\widehat{H}(\vb{k})$ but it will be crucial for us that, because of the finite range condition, $\widehat{H}(\vb{k})$ is a (matrix-valued) Laurent polynomial in $e^{ik_1}, e^{i k_2}$, which we shall call a trig-polynomial. 

We now turn to the fact that a translation-invariant F$^4$ Hamiltonian has a flat ground state band. 
\begin{defn}
\label{def:flatband}
We say that a translation-invariant Hamiltonian $G$ has a flat ground state band if for all $\vb{k}\in \Tee^2$,
\begin{enumerate}
\item $\ft{G}(\vb{k}) \geq 0$,
\item $\Ker(\ft{G}(\vb{k})) \setminus \{0\} \neq\emptyset$.
\end{enumerate}
\end{defn}

The connection between frustration-freeness and flat band is not new but we feel that it is worth stating explicitly.
\begin{prop}\label{prop:flat_band}
If $H$ is a translation-invariant F$^4$ Hamiltonian, then $H$ has a flat ground state band. Conversely, if $G$ is a translation-invariant finite range gapped Hamiltonian with a flat band, then $H=G^2$ is a gapped F$^4$ Hamiltonian with the same kernel.
\end{prop}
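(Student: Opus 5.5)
For the first direction I will work in Fourier space. By \eqref{eq: H of k}, $\ft{H}(\vb{k}) = \sum_j \ft{v}_{0,j}(\vb{k})\ft{v}_{0,j}(\vb{k})^*$ is a sum of rank-one positive matrices, so $\ft{H}(\vb{k})\geq 0$ for every $\vb{k}$. Property (ii) of Definition~\ref{def:flatband} then has to come from the existence of a nonzero vector $\psi\in\Ker H$, which is F$^4$ condition iii. From $\langle \psi, H\psi\rangle = 0$ and positivity of every term I get $\langle T_{\vb{x}} v_j, \psi\rangle = 0$ for all $\vb{x}, j$. In Fourier space this reads $\ft{v}_{0,j}(\vb{k})^*\ft{\psi}(\vb{k}) = 0$ for almost every $\vb{k}$, because the family $\{e^{i\vb{k}\vb{x}}\}$ is total.

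Hence $\ft{\psi}(\vb{k})\in\Ker\ft{H}(\vb{k})$ almost everywhere. Since $\ft{\psi}\neq 0$ in $L^2$, this gives $\Ker \ft{H}(\vb{k})\neq\{0\}$ on a set $S$ of positive measure. To extend this to every $\vb{k}$, I use that $\det\ft{H}(\vb{k})$ is a trig-polynomial in $e^{ik_1},e^{ik_2}$, hence real-analytic on $\Tee^2$. It vanishes on $S$, which has positive measure, so it vanishes identically, and $\Ker\ft{H}(\vb{k})$ is nontrivial for all $\vb{k}$. This analytic-continuation step is the only point that needs care: an $L^2$ kernel vector alone only gives a.e.\ information.

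For the converse, set $H=G^2$. Since $\ft{G}(\vb{k})$ is a Hermitian trig-polynomial matrix, I write $G=\sum_{\vb{x}} T_{\vb{x}} g T_{\vb{x}}^{-1}$ in terms of columns: $G=\sum_{\vb{y}}\sum_{i=1}^N G\ket{\delta_{\vb{y},i}}\bra{\delta_{\vb{y},i}}$. Then
\[
G^2 = G\Big(\sum_{\vb{y},i}\ket{\delta_{\vb{y},i}}\bra{\delta_{\vb{y},i}}\Big)G=\sum_{\vb{y},i} \ket{w_{\vb{y},i}}\bra{w_{\vb{y},i}},\qquad w_{\vb{y},i}=G\delta_{\vb{y},i}.
\]
Each $w_{\vb{y},i}$ is compactly supported near $\vb{y}$ by finite range, and $w_{\vb{y},i}=T_{\vb{y}}w_{0,i}$. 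Setting $H_{\vb{y}}=\sum_i \ket{w_{\vb{y},i}}\bra{w_{\vb{y},i}}\geq 0$ verifies conditions i.\ and ii. For the kernel, $\Ker G^2=\Ker G$. Fibrewise, $\ft{G}(\vb{k})\geq 0$, so $\Ker\ft{G}^2(\vb{k})=\Ker\ft{G}(\vb{k})$, and each of these fibres is nonzero for all $\vb{k}$.

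Nontriviality of $\Ker H$ as an $L^2$ subspace then follows from gappedness. The spectrum of $G$ is the union over $\vb{k}$ of the eigenvalues of $\ft{G}(\vb{k})$. Since $0$ lies in every fibre's spectrum while the spectrum is contained in $\{0\}\cup(\gamma,\infty)$, continuity of eigenvalues forces the number of zero eigenvalues to be constant in $\vb{k}$. The spectral projection $P(\vb{k})=\chi_{\{0\}}(\ft{G}(\vb{k}))$, obtained by a Riesz integral, is therefore analytic, nonzero and of constant rank. Applying $P(\vb{k})$ to a constant vector gives nonzero $L^2$ functions in $\Ker\ft{G}$, which gives condition iii. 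Finally, $\sigma(G^2)=\sigma(G)^2\subseteq\{0\}\cup(\gamma^2,\infty)$, so $H$ is gapped with the same kernel as $G$.
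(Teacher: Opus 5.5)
Your proof is correct and follows essentially the same route as the paper. In the forward direction you use positivity of $\widehat{H}(\vb{k})$, an a.e.\ kernel vector coming from $\mathrm{Ker}\, H$, and the fact that $\det\widehat{H}(\vb{k})$ vanishing on a set of positive measure must vanish identically. The paper proves that last fact by hand for trig-polynomials via Tonelli and the one-variable case, whereas you invoke the general real-analytic statement that its Remark mentions. In the converse you use the column decomposition $G^2=\sum_{\vb{y},i}G\ket{\delta_{\vb{y},i}}\bra{\delta_{\vb{y},i}}G$ and apply the analytic Riesz projection onto $\mathrm{Ker}\,\widehat{G}(\vb{k})$ to a fixed vector, as the paper does. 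One small point to tighten: take that fixed vector in $\mathrm{Ran}\,P(\vb{k}_0)$ for some $\vb{k}_0$, as the paper does, since an arbitrary constant vector could be annihilated by $P(\vb{k})$ for every $\vb{k}$.
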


\begin{proof}
If $H$ is a translation-invariant F$^4$ Hamiltonian, then i. in Definition~\ref{def:flatband} follows immediately from~\eqref{eq: H of k}, and $\vb{k}\mapsto\widehat H(\vb{k})$ is a trig-polynomial. To show ii., we note that if $0\neq\Omega\in \Ker H$, then $\widehat H(\vb{k})\widehat\Omega(\vb{k}) = 0$ for almost all $\vb{k}\in\Tee^2$. If the set $Z=\{\vb{k}\in\Tee^2: \mathrm{det}(\widehat H(\vb{k}))=0\}$ had measure zero, then $\widehat\Omega(\vb{k})=0$ almost everywhere, and in turn $\Omega = 0$, which is a contradiction. Hence $\vert Z\vert>0$. Let $\mathrm{det}(\widehat H(\vb{k})) = \sum_{m,n} \beta_{m,n}e^{i(m k_1 + n k_2)}$. For any fixed $k_2$, let $Z_{k_2}=\{k_1\in\Tee: \mathrm{det}(\widehat  H(\vb{k}))=0\}$; the function $k_1\mapsto\mathrm{det}(\widehat H(k_1,k_2))$ is a Laurent polynomial in one variable, $\vert Z_{k_2}\vert >0$ if and only if the polynomial is the zero polynomial, namely $\mathrm{det}(\widehat H(k_1,k_2)) = 0$ for all $k_1$, since otherwise the set of zeros is finite. By Tonelli, $0<\vert Z\vert =\int_{0}^{2\pi}\vert Z_{k_2}\vert d k_2$, and this implies that $B = \{k_2: \mathrm{det}(\widehat H(\cdot,k_2)) = 0\}$ has positive measure. For any $k_2\in B$, all Fourier coefficients must vanish, namely $\alpha_m(k_2) = \sum_{n} \beta_{m,n}e^{i n k_2}=0$. But $\alpha_m(k_2)$ is a Laurent polynomial and it vanishes on a set of positive measure, hence it is uniformly zero. But this in turn implies that $\beta_{m,n} = 0$ for all $n,m$, namely $\mathrm{det}(\widehat H(\vb{k}))=0$ for all $\vb{k}$, which is ii. of Definition~\ref{def:flatband}.

For the converse statement, we first note that $\widehat{G^2}(\vb{k}) = \widehat{G}(\vb{k})\widehat{G}(\vb{k}) = \widehat{G}(\vb{k})\widehat{G}(\vb{k})^*$ since $\widehat{G}(\vb{k})$ is self-adjoint. In particular, $\widehat{G^2}(\vb{k})\geq 0$. Moreover, $G$ is finite range so that $\widehat{G}(\vb{k})$ is trig-polynomial and hence so is $\widehat{G^2}(\vb{k})$. This in turn implies that $H = G^2$ has finite range. Explicitly,
\begin{equation*}
\widehat{G^2}(\vb{k}) = \sum_{\alpha = 1}^N \ft u_\alpha(\vb{k})\ft u_\alpha(\vb{k})^*
\end{equation*}
where $\ft u_\alpha$ are the columns of $\ft G$ and therefore trig-polynomials. Hence $u_\alpha = \mathcal{F}^{-1}\ft u_\alpha$ are all compactly supported and $H = \sum_{\vb{x}} H_{\vb{x}}$ where $H_{\vb{x}} = \sum_\alpha T_{\vb{x}}\ket{u_\alpha}\bra{u_\alpha}T_{\vb{x}}^{-1}$. Now $\bigcap_{\vb{x}} \Ker(H_{\vb{x}}) = \Ker(G^2) = \Ker(G)$. With this, it remains to construct a non-vanishing vector $\widehat{\Omega}\in L^2(\Tee^2)$ such that $\widehat{G}\widehat{\Omega} = 0$. For this, we note that the uniform gap assumption for $\widehat{G^2}(\vb{k})$ and the continuity of $\vb{k}\mapsto\widehat{G^2}(\vb{k})$ implies that the Riesz projection $P(\vb{k})$ on $\Ker(\widehat{G}(\vb{k}))$ is continuous. Let $\vb{k}_0\in\Tee^2$ and let $\xi$ be a normalized vector in $\Ker(\widehat{G}(\vb{k_0}))$. Then for all $\vb{k}$ sufficiently close to $\vb{k}_0$, $\Vert P(\vb{k})\xi\Vert\geq \frac{1}{2}$, and so $\ft{\Omega}(\vb{k}) = \chi(\vb{k})P(\vb{k})\xi$ is a function in $L^2(\Tee^2)$ which satisfies $\widehat{G}(\vb{k})\ft{\Omega}(\vb{k}) = 0$ for all $k\in\Tee^2$, where $\chi$ is a bump function localized in a neighbourhood of $\vb{k}_0$. This concludes the proof. 
\end{proof}

\begin{rem}
(i) The central argument of the forward implication is simple: the determinant of the fibre Hamiltonians being a trig-polynomial because of the finite range condition, if it vanishes on a set of positive measure, then it must vanish everywhere. A slightly more sophisticated argument shows that the same conclusion holds in the more general case where the determinant is a real-analytic function on the torus.\\
\noindent (ii) Following up on the proof of the backward implication, if $U$ denotes a neighbourhood of $\vb{k}_0$ on which $\Vert P(\vb{k})\xi\Vert\geq \frac{1}{2}$, then replacing $\chi$ by any $g\in L^2(U)$ defines an injective map $\mathcal{F}^{-1}(g P(\cdot)\xi) = \int_{\Tee^2}g(\vb{k}) e^{- i\vb{k}\vb{x}} P(\vb{k})\xi \, d\vb{k}$ into $\Ker(G)$. This shows that the kernel, which corresponds to the flat band, is infinite dimensional. 
\end{rem}

\subsection{Bulk Index}
\label{BulkIndexSetUp}
We define the bulk index of a gapped F$^4$ Hamiltonian as the Chern number of the ground state bundle $\cup_{\vb{k} \in \Tee^2} \; \left( \{\vb{k}\} \times \Ker{\ft{H}(\vb{k})}\right)$.
\begin{thm}\label{thm:Bulk}
The bulk index of a gapped, translation-invariant F$^4$ Hamiltonian vanishes.
\end{thm}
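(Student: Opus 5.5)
The plan is to show that the ground state bundle admits a global frame of sections that are trig-polynomials; such a frame trivialises the bundle, so its Chern number vanishes. Write $\ft{H}(\vb{k}) = V(\vb{k})V(\vb{k})^*$, where $V(\vb{k})$ is the $N\times M$ matrix whose columns are $\ft v_j(\vb{k}) = \ft v_{0,j}(\vb{k})$. Then $\Ker \ft H(\vb{k}) = \Ker V(\vb{k})^*$. By Proposition~\ref{prop:flat_band} and the gap assumption, the rank $r$ of $V(\vb{k})$ is constant in $\vb{k}$, and the ground state bundle $E$ has constant rank $N-r$. Its orthogonal complement $E^\perp$ has fibre $\mathrm{Ran}\, V(\vb{k})$. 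Since the Chern numbers of $E$ and $E^\perp$ sum to that of the trivial bundle $\Tee^2\times\Cee^N$, it suffices to show that $c_1(E^\perp)=0$.

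The key observation is that $E^\perp$ is spanned fibrewise by the trig-polynomial sections $\ft v_j$. Equivalently, $\det E^\perp = \Lambda^r E^\perp$ is spanned at each $\vb{k}$ by the wedge products $\ft v_{j_1}\wedge\dots\wedge \ft v_{j_r}(\vb{k})$, which are again trig-polynomial sections of $\Lambda^r\Cee^N$, and at every $\vb{k}$ at least one of them is nonzero.

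From this, I would use a winding argument. Fix a multi-index $J$ for which $s_J=\ft v_{j_1}\wedge\dots\wedge\ft v_{j_r}$ is not identically zero. Because $s_J$ is a trig-polynomial, it vanishes either on all of $\Tee^2$ or on a real-analytic subset. Generically this zero set consists of finitely many points, and $c_1(\det E^\perp)$ equals the signed count of these zeros.

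The main obstacle is showing that this count vanishes. I would try the following route. Write $s_J = \lambda_J(\vb{k})\, \omega(\vb{k})$, where $\omega$ is a local unit section of $\det E^\perp$ and $\lambda_J$ is a scalar. The function $|\lambda_J|^2 = \|s_J\|^2$ is a real trig-polynomial. The sum $\sum_J \|s_J\|^2 = \det{}'(V V^*)$, the product of the nonzero eigenvalues of $\ft H(\vb{k})$, is a strictly positive function, bounded below by $\gamma^r$. One can then build a global nonvanishing section, such as $\sum_J \overline{\langle s_J, s_{J_0}\rangle}\, s_J$, but only after fixing a reference section $s_{J_0}$, and this may fail where $s_{J_0}$ vanishes.

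If that attempt fails, I would use the Dubail--Read style argument instead. Grouping sites in the $\vb{e}_1$ direction and viewing $V$ as a module over the Laurent polynomial ring $\Cee[z_1^{\pm1},z_2^{\pm1}]$, the constant-rank condition on the torus should make $\mathrm{Ran}\,V$ locally free, hence stably free by Quillen--Suslin. Its analytic extension to $(\Cee^\times)^2$ then has trivial Chern class, because $H^2((\Cee^\times)^2)$ classes of algebraic line bundles vanish, as the Picard group of a Laurent polynomial ring is trivial. The delicate point in this approach is the restriction from $(\Cee^\times)^2$ to the real torus $\Tee^2$: the module need only have constant rank on $\Tee^2$, not on all of $(\Cee^\times)^2$. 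To handle this, I would localise to the coordinate ring of the torus as a real variety, namely real trig-polynomials, where the Picard group is still trivial on the relevant components.
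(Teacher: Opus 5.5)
Your proposal has a genuine gap. Neither route proves the vanishing, and the step you call ``the main obstacle'' is the whole content of the theorem.

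In the first route, the fact that finitely many global sections span $E^\perp$ fibrewise carries no topological information on its own: every subbundle of $\Tee^2\times\bbC^N$ is spanned by the projections of the constant basis vectors. So the polynomial nature of the $\ft v_j$ must enter essentially. Knowing that the zero set of $s_J$ is real-analytic does not force its signed count to vanish. As you suspect, the candidate $\sum_J\overline{\langle s_J,s_{J_0}\rangle}\,s_J$ carries a factor $\lambda_{J_0}$ (or $\overline{\lambda_{J_0}}$), so it vanishes wherever $s_{J_0}$ does.

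In the second route, Quillen--Suslin/Swan and $\mathrm{Pic}(R)=0$ need $\mathrm{Ran}_RV$ to be projective over $R=\bbC[X^{\pm1},Y^{\pm1}]$, and this fails in general. Take $N=M=2$, $\ft v_1=(X-2)e$, $\ft v_2=(Y-2)e$, which is a gapped F$^4$ model. Then $\mathrm{Ran}_RV$ is the maximal ideal of the point $(2,2)\in(\bbC^\times)^2$, which is not projective. Your repair is also not correct as stated. The ring of real trigonometric polynomials is not a UFD and has non-trivial Picard group: already $\mathbb{R}[\cos k,\sin k]$ has Picard group $\mathbb{Z}/2$, generated by the M\"obius bundle. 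Moreover, its complexification is just $R$ again. What would work is localising $R$ at the Laurent polynomials without zeros on $\Tee^2$. You would then still have to show that the module becomes projective there, and that algebraic triviality yields topological triviality.

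The paper avoids all this. Hilbert's syzygy theorem gives a finite free resolution $0\to R^{N_2}\to R^{N_1}\to R^N\to\mathrm{Im}_R\ft H\to0$. Constant pointwise rank on $\Tee^2$, through $\mathrm{Rank}_\bbF=\mathrm{Rank}_\bbC$ and the Cramer-type polynomial kernel vectors $W_l$, shows that the evaluated sequence stays exact at every $\vb k$. Hence $E^\perp$ is stably trivial.

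Within your own set-up, the missing idea is unique factorisation in $R$:
\begin{enumerate}
\item Since $\mathrm{Rank}_\bbF V=r$ (Proposition~\ref{prop:rank}), all nonzero $s_J$ lie on one $\bbF$-line of $\Lambda^r\bbF^N$.
\item Divide one of them by the gcd of its Pl\"ucker coordinates to get a primitive $w\in\Lambda^rR^N$.
\item If $s_J=(a/b)w$ with $a,b$ coprime, then $b$ divides every coordinate of $w$, so $b$ is a unit and $s_J=a_Jw$ with $a_J\in R$.
\item At each $\vb k\in\Tee^2$ some $s_J(\vb k)\neq0$. Hence $w(\vb k)\neq0$, and it spans $\Lambda^r\,\mathrm{Ran}\,V(\vb k)$.
\item Thus $w$ is a global frame of $\det E^\perp$, and $c_1(E^\perp)=0$.
\end{enumerate}
This is the ``$\mathrm{Pic}=0$'' input you were reaching for, used where it is valid. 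It gives a more elementary proof than the syzygy route.
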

\noindent As discussed above, this theorem is not new. Using $K$-theory, \cite{ChenTrivalFlatBand} proved that flat bands from local Hamiltonians are trivial. In Section~\ref{sec:bulk}, we give an alternative proof of this statement using an approach similar to~\cite{Dubail_2015}.

\subsection{Edge Index}
\label{subsec:edge_index}
The edge Hamiltonian~(\ref{eq:H_edge}) retains translation invariance in the $e_2$ direction. In this context, we use $k = k_2$ to denote the Fourier mode in the $e_2$ direction. Specifically, one performs partial Fourier transform,
$$
\mathcal{F}_2 : \mathcal{H}_e \to \ell^2(\mathbb{N}, \mathbb{C}^N)\otimes L^2(\mathbb{S}^1,\tfrac{dk}{2\pi}), \quad f \mapsto \widehat{f}(x_1,k) =  \sum_{x_2\in\Zee} e^{i k x_2} f(x_1, x_2),
$$
which gives 
$$
\widehat{v}_{{\vb{x}},j}(k) =  e^{i k x_2} \left( \delta_{x_1} a_j + \delta_{x_1 + 1} b_j + e^{i k} \delta_{x_1} c_j + e^{ik} \delta_{x_1 + 1} d_j \right).
$$
The Hamiltonian $\mathcal{F}_2 H_e \mathcal{F}_2^{-1}$ is then fibered over $k\in \mathbb{S}^1$ and each $\widehat H_e(k)$ is a nearest-neighbor hopping Hamiltonian on $\ell^2(\mathbb{N}, \mathbb{C}^N)$. Specifically,
\begin{equation}
\label{eq:edge_Ham_k}
\widehat H_e(k) = \sum_{x_1 \in \mathbb{N}}\sum_{j = 1}^M \; \widehat{v}_{(x_1, 0), j}(k)\; \widehat{v}_{(x_1, 0), j}(k)^*.
\end{equation}
The spectrum of $\widehat H_e(k)$ is manifestly non-negative and by the gap assumption, the spectrum of $\widehat H_e(k)$ in the interval $(0, \gamma)$ is discrete. Although it may in principle accumulate at $0$, we will show in Section~\ref{sec:edge} that it is in fact finite. The essential spectrum of $\widehat H_e(k)$ is equal to the essential spectrum of $\widehat H(k)$; and in particular $0$ belongs to the essential spectrum, see \cite{GrafPorta}. We define
\begin{equation}\label{eq:Gamma}
\Gamma = \{ (k, E) \in \mathbb{S}^1 \times (0, \gamma) : E \in \sigma(\widehat H_e(k)) \} 
\end{equation}
  By standard perturbation theory \cite{Kato}, for any point $(k,E) \in \Gamma$ there exists a neighborhood $U$ with base $k_- < k < k_+$ such that $\Gamma \cap U = \cup_{j} \{ (k, E_j(k)), k \in (k_-, k_+)\}$ for finitely many real analytic functions $E_j(k)$, which we shall refer to as local branches. Thus $\Gamma$ is locally a finite union of graphs of real analytic functions of $k$. We shall use this graph structure, which is stronger than analyticity of the set alone, and count branches with their spectral multiplicities.
  
 Globally, eigenvalue crossings may lead to a non-trivial permutation of the branches $E_j$ between $k=0$ and $k=2\pi$, thus preventing a single global analytic labeling of eigenvalues on $\mathbb{S}^1$; however, as we will explain, this plays no role in what follows, and the reader may keep in mind the generic situation illustrated in Figure~\ref{fig:branch cut}. 
\begin{figure}[t]
\centering
\definecolor{edgeblue}{RGB}{21,101,192}
\definecolor{bandgrey}{RGB}{78,78,84}
\definecolor{gammapurple}{RGB}{106,27,120}
\begin{tikzpicture}[
  x=1.3cm, y=3cm,
  band/.style     = {bandgrey, thick},
  edge/.style     = {edgeblue, thick},
  frame/.style    = {bandgrey!35, line width=0.5pt},
  zeroline/.style = {bandgrey!70, line width=0.6pt},
  drop/.style     = {gammapurple!60, line width=0.45pt,
                     dash pattern=on 1.6pt off 1.6pt},
]
\def\xmax{6.2832}   
\def\ymax{1.32}
\def\gam{0.19278}   
\def\Eedge{0.19278 - 0.17*sin(deg(2*\x+0.95)) - 0.060*sin(deg(\x+3.18))}
\def\Uone{0.70 + 0.028*cos(deg(\x+0.6))
              + 0.032*cos(deg(\x/2+0.20)) + 0.088*cos(deg(1.5*\x+0.60))}
\def\Utwo{0.70 + 0.028*cos(deg(\x+0.6))
              - 0.032*cos(deg(\x/2+0.20)) - 0.088*cos(deg(1.5*\x+0.60))}
\def\Cont{1.10 + 0.045*cos(deg(2*\x+1.0))}
\def\Barc{1.10 + 0.045*cos(deg(2*\x+1.0)) - 0.22*sin(deg(pi*(\x-1.05)/3.25))}

\path[pattern=north east lines, pattern color=bandgrey!55]
  plot[domain=0:\xmax, samples=200, smooth, variable=\x] ({\x},{\Cont})
  -- (\xmax,\ymax) -- (0,\ymax) -- cycle;
\draw[band] plot[domain=0:\xmax, samples=200, smooth, variable=\x] ({\x},{\Cont});

\draw[band] plot[domain=1.05:4.30, samples=160, smooth, variable=\x] ({\x},{\Barc});

\draw[band] plot[domain=0:\xmax, samples=200, smooth, variable=\x] ({\x},{\Uone});
\draw[band] plot[domain=0:\xmax, samples=200, smooth, variable=\x] ({\x},{\Utwo});

\foreach \k in {0.9465, 2.7310, 4.4113, 5.7192}
  \draw[drop] (\k,0) -- (\k,\gam);

\draw[gammapurple, line width=0.7pt] (0,\gam) -- (\xmax,\gam);
\node[gammapurple, left=2pt] at (0,\gam) {$\gamma$};

\begin{scope}
  \clip (0,\gam) rectangle (\xmax,\ymax);
  \draw[band] plot[domain=0:\xmax, samples=400, smooth, variable=\x] ({\x},{\Eedge});
\end{scope}
\begin{scope}
  \clip (0,0) rectangle (\xmax,\gam);
  \draw[edge] plot[domain=0:\xmax, samples=400, smooth, variable=\x] ({\x},{\Eedge});
\end{scope}
\node[edgeblue, font=\small] at (3.53,0.100) {$\Gamma$};

\foreach \k in {0.9465, 2.7310, 4.4113, 5.7192}
  \fill[gammapurple] (\k,\gam) circle (1.8pt);

\draw[frame] (0,0) rectangle (\xmax,\ymax);
\draw[zeroline] (0,0) -- (\xmax,0);
\node[bandgrey, left=2pt, font=\footnotesize] at (0,0) {$0$};
\foreach \k/\an/\lab in {0/north west/{$0$}, 0.9465/north/{$k_1$}, 2.7310/north/{$k_2$},
                         4.4113/north/{$k_3$}, 5.7192/north/{$k_4$},
                         6.2832/north east/{$2\pi$}} {
  \draw[frame] (\k,0) -- (\k,-0.018);
  \node[font=\footnotesize, anchor=\an, inner sep=2pt] at (\k,-0.018) {\lab};
}
\end{tikzpicture}
    \caption{The set $\Gamma$ is in blue. The eigenvalues of $\widehat H_e(k)$ below $\gamma$ can be seen as functions defined on the intervals $[k_2, k_3]$ and $[k_4, k_1]$. Analytical continuation around the circle may permute branches between $k=0$ and $k = 2 \pi$, so a global periodic analytic labeling need not exist. The argument does not require such a labeling.}
    \label{fig:branch cut}
\end{figure}

A point $(k,E) \in \Gamma$ is called regular if locally $\Gamma$ is the graph of a single analytic function $E(k)$ with $E'(k) \neq 0$.  An energy $\mu \in (0, \gamma)$ is called regular if the line $\mathbb{S}^1 \times \mu$ intersects $\Gamma$ only at regular points. All but finitely many energies in $(0, \gamma)$ are regular. 

\begin{defn}\label{def:Edge Index}
    The edge index of $H_e$ is the total number of signed crossings of $\Gamma$ across any regular energy, counted with spectral multiplicity, with increasing crossings counted positively.
\end{defn}
That this is well-defined, in particular, independent of the choice of regular energy $\mu$ follows from the observation that the edge index is the degree $\mathrm{deg}(f;\mu)$ of the map $f : \bar \Gamma \to \mathbb{R}$ defined by $(k,E) \mapsto E$, and the fact that $\mathrm{deg}(f;\mu)$ is independent of the choice of regular value, see Theorem~13.1.2 in~\cite{DubrovinNovikovFomenko1985}.\\

The edge index of $H_e$ also has an analytic expression, see for example~\cite{GrafIQHE}. It is given by
    \begin{equation}
    \label{eq:edge_index}
     -\int_{\mathbb{S}^1} \tr \left(g'(\widehat H_e(k)) \frac{d\widehat H_e(k)}{dk}\right)dk,
    \end{equation}
    for any $g\in C^1(\real)$, that is equal to $1$ on  $(-\infty,0]$, and equal to $0$ on $[\gamma,\infty)$.

The closure $\bar\Gamma\subseteq\mathbb S^1\times[0,\gamma]$ will be treated in the same branchwise manner. Lemma~\ref{BranchDef} below shows that its branches continue as nonnegative real analytic functions of $k$ through $E=0$. After including these continuations, separating intersecting branches and retaining multiplicities, we obtain a compact oriented curve $\mathcal C$. We take the upper cutoff $\gamma$ to be regular and write
\begin{equation}\label{eq:resolved-curve}
f:\mathcal C\longrightarrow\mathbb R,\qquad (k,E)\longmapsto E.
\end{equation}
The boundary $\partial\mathcal C$ consists of endpoints of the continued branches, not the topological boundary of $\bar\Gamma$ in the cylinder. Gluing analytic pieces at crossings or at $k=0=2\pi$ creates no boundary; a permutation of branch labels is allowed. 

The figure illustrates the difference between the topological boundary and the end-points of branches. 

\begin{figure}[h]
\centering
\begin{tikzpicture}[
  x=1cm, y=1cm,
  curve/.style   = {thick, black},
  link/.style    = {draw=black, fill=white, line width=0.6pt},
  ball/.style    = {dashed, gray!70, line width=0.5pt},
  plabel/.style  = {font=\footnotesize},
  nlabel/.style  = {font=\footnotesize, anchor=north},
]

\begin{scope}
  \clip (-1.5,-1.5) rectangle (1.5,1.5);
  \draw[ball] (0,0) circle (1);
  \draw[curve, domain=0:1.5, smooth, variable=\x] plot ({\x},{0.45*\x-0.08*\x*\x});
\end{scope}
\fill (0,0) circle (1.9pt);
\node[plabel, anchor=north east] at (-0.04,-0.04) {$p$};
\draw[link] (0.935,0.351) circle (1.7pt);

\begin{scope}[shift={(7.2,0)}]
  \begin{scope}
    \clip (-1.5,-1.5) rectangle (1.5,1.5);
    \draw[ball] (0,0) circle (1);
    \draw[curve, domain=-1.5:1.5, smooth, variable=\x] plot ({\x},{0.15*\x});
    \draw[curve, domain=0:1.32, smooth, variable=\x] plot ({\x},{0.9*\x+0.2*\x*\x});
  \end{scope}
  \fill (0,0) circle (1.9pt);
  \node[plabel, anchor=north west] at (0.10,-0.10) {$p$};
  \draw[link] (0.989,0.148) circle (1.7pt);
  \draw[link] (-0.989,-0.148) circle (1.7pt);
  \draw[link] (0.693,0.720) circle (1.7pt);
\end{scope}

\end{tikzpicture}
\caption{Both pictures show boundary points of branches though only the one on right is a topological boundary point.}
\label{fig:branching}
\end{figure}

Local perturbation theory already rules out branch endpoints in $(0,\gamma)$. The two remaining possibilities, $0$ and $\gamma$, are different: $\gamma$ is an arbitrary cutoff, through which the branches continue analytically, whereas $0$ belongs to the essential spectrum. At zero, perturbation theory does not apply directly and a priori a branch might terminate. The key technical result is that every such branch continues through zero as an analytic graph over $k$.

\begin{lemma}
\label{BranchDef}
Let $H$ be a gapped, translation invariant F$^4$ Hamiltonian. At every point of $\bar\Gamma$ with $E<\gamma$, including $E=0$, the set is locally a finite union of graphs of nonnegative real analytic functions of $k$, defined on a full neighborhood of that momentum. The continuations of positive-energy branches preserve their spectral multiplicities. In particular, $f(\partial\mathcal C)\subseteq\{\gamma\}$.
\end{lemma}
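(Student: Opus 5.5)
The plan is to move the problem from $\widehat H_e(k)$ to a companion operator for which $0$ is an \emph{isolated} eigenvalue of finite multiplicity, and then apply Kato's analytic perturbation theory there. The frustration-free structure makes this possible. By \eqref{eq:edge_Ham_k} we can write $\widehat H_e(k)=A(k)A(k)^*$, where
\[
A(k):\ell^2(\mathbb N,\mathbb C^M)\to\ell^2(\mathbb N,\mathbb C^N),\qquad A(k)\,\delta_{x_1}\otimes e_j=\widehat v_{(x_1,0),j}(k).
\]
By \eqref{eq:grain} this operator is bidiagonal in $x_1$ with trig-polynomial coefficients in $k$. Consider the Gram operator $K(k)=A(k)^*A(k)$. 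It is a block Jacobi operator on $\ell^2(\mathbb N,\mathbb C^M)$ and an entire analytic family in $k$, self-adjoint for real $k$. Since $AA^*$ and $A^*A$ have the same nonzero spectrum with the same multiplicities, the set $\Gamma$ of \eqref{eq:Gamma} is exactly the spectrum of $K(k)$ in $(0,\gamma)$, multiplicities included. It therefore suffices to prove the branch statement for $K(k)$.

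The first step is bulk control of $K$. For the full-plane symbol, the nonzero spectrum of $\widehat K(\vb k)=\widehat A(\vb k)^*\widehat A(\vb k)$ coincides with that of $\widehat H(\vb k)$, which lies in $(\gamma,\infty)$ by the gap assumption. The gap also forces $\mathrm{rank}\,\widehat H(\vb k)$ to be constant, hence $\mathrm{rank}\,\widehat A(\vb k)$ is constant. So $\Ker\widehat A(\vb k)$ is a vector bundle of constant rank $M-\mathrm{rank}\,\widehat A$. Its projection is the Riesz projection of $\widehat K$ at $0$, and it is real analytic in $\vb k$. The remaining task is to quotient out this \emph{bulk} kernel on the half-line, since it is what makes $0$ essential for $K(k)$.

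The second step, which I expect to be the main obstacle, is to construct for each $k$ near a given $k_0$ a closed subspace $\mathcal R(k)\subseteq\Ker A(k)$ with three properties:
\begin{enumerate}
\item $\mathcal R(k)$ is spanned by half-line truncations of the bulk relations. These are Laurent-polynomial (in $z=e^{ik_1}$) vectors in $\Ker\widehat A(\cdot,k)$ that, after a shift, are supported in $x_1\ge0$.
\item Its orthogonal projection $\Pi(k)$ depends analytically on $k$.
\item $K(k)$ restricted to $\mathcal R(k)^\perp$ has $0$ as an isolated point of its spectrum, with finite multiplicity.
\end{enumerate}
For the first and third properties I would first pass to the module of polynomial relations over $\mathbb C[z,z^{-1}]$, which is free because the rank is constant. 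Polynomial generators then give compactly supported relations, and all their translates into $x_1\ge0$ lie in $\Ker A(k)$. For the third property the key point is that, on $\mathcal R(k)^\perp$, $K(k)$ is a finite-rank perturbation near the boundary of a half-line operator whose symbol has spectrum in $\{0\}\cup(\gamma,\infty)$ restricted to the complement of the relation bundle, i.e.\ contained in $(\gamma,\infty)$. Its essential spectrum therefore avoids $[0,\gamma)$ (cf.\ \cite{GrafPorta}). For the second property I would choose the generators analytically in $k$; the Gram matrix of the truncated translates is then an analytic family of invertible Toeplitz-type operators, and the projection $\Pi(k)$ built from it is analytic.

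The third step applies Kato \cite{Kato} to the analytic self-adjoint family $\Pi(k)^\perp K(k)\Pi(k)^\perp$ on $\mathcal R(k)^\perp$, conjugated to a fixed space by an analytic unitary (Kato's transformation function). Near $k_0$ the part of the spectrum in $[0,\gamma')$, for a regular $\gamma'<\gamma$, is a finite union of graphs of real analytic functions. Its total multiplicity is locally constant, and near $0$ it consists of the $0$-group and positive eigenvalues. These graphs are nonnegative because $K\ge0$. Their positive parts are exactly the branches of $\Gamma$ with their multiplicities, since $\mathcal R(k)\subseteq\Ker A(k)$ changes only the zero eigenspace.

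This gives the local graph structure at every point of $\bar\Gamma$ with $E<\gamma$, including $E=0$. In particular, the spectrum of $\widehat H_e(k)$ in $(0,\gamma)$ is finite. It also shows that no branch of $\mathcal C$ ends at $E=0$. Together with local perturbation theory in $(0,\gamma)$, this yields $f(\partial\mathcal C)\subseteq\{\gamma\}$.
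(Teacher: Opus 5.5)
Your route is genuinely different from the paper's, and it can be made to work. However, one step of your construction of $\mathcal R(k)$ fails as written. You take generators of the $\mathbb C[z^{\pm1}]$-module of polynomial relations and then ``choose the generators analytically in $k$''. In general this is impossible, as the example of Section~\ref{sec:example} shows. There $\widehat A=e\,(f_1,f_2)$ with $f_1=1+2z$, $f_2=1+(3+w)z$, $w=e^{ik_2}$. For $w\neq-1$ the relations are generated by $(f_2,-f_1)$. At $w=-1$, however, $\gcd(f_1,f_2)=1+2z$ and the module is generated by $(1,-1)$. With pointwise module generators, $\mathcal R(k)$ therefore jumps at $k_2=\pi$ by one dimension, namely by the zero mode $(1,-1)\otimes\bigl((-1/2)^n\bigr)_n$ of $K(\pi)$. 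So $\Pi(k)$ is discontinuous exactly where the edge branch touches zero.

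What you actually need is weaker: an analytic-in-$k$ matrix $P(z,k)$, polynomial in $z$, with $\widehat A P\equiv0$ and whose columns form a basis of $\mathrm{Ker}\,\widehat A(z,k)$ at every $|z|=1$ (in the example, $(f_2,-f_1)$ for all $k$). You must then accept that $\mathcal R(k)=\mathrm{Ran}\,T_{P(\cdot,k)}$ can be strictly smaller than the span of all half-line relations at isolated momenta. Freeness does not give such a $P$; it is automatic over a PID, whatever the rank. You need the constant-rank evaluation argument of Section~\ref{sec:bulk} applied to $\widehat A$. For instance, take the vectors $W_l$ at finitely many points covering the circle, choose a $\mathbb C[z^{\pm1}]$-basis of the module they generate at $k_0$, and lift it to nearby $k$.

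Given such a $P$, your property (3) becomes a clean Fredholm statement rather than an appeal to \cite{GrafPorta}. Let $\Pi_b(z)$ be the projection onto $\mathrm{Ker}\,\widehat A(z,k)$. Then $\Pi=T_PT_{P^*P}^{-1}T_P^*$ differs from $T_{\Pi_b}$ by a compact operator. Hence $K+\gamma\Pi$ is a compact perturbation of the block Toeplitz operator with symbol $\widehat K+\gamma\Pi_b\ge\gamma$, and the essential spectrum of $K|_{\mathcal R^\perp}$ lies in $[\gamma,\infty)$.

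The paper avoids this algebra entirely. It stays in the physical space and uses $H_a=\sum_{x_1\neq0}H_{\vb x}$. This is a bulk operator and a rank-$\le M$ perturbation of $\widehat H(k)$, and by frustration-freeness it annihilates $\mathrm{Ran}\,P(k)$, the analytic Riesz projection onto the bulk kernel. After conjugating $P(k)$ to $P(0)$, the reduced operator is a finite-rank perturbation of an operator $\ge\gamma$. Kato's theory then applies through $E=0$, and $\widehat H_a=\widehat H_e\oplus L$ transfers the branches to the edge. The bulk kernel is thus quotiented out for free, and only translation invariance in $x_2$ is used; this is what underlies the paper's remark on cylinders and disorder. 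Your argument needs the Toeplitz structure in $x_1$, i.e.\ full translation invariance.

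In exchange, your route gives structural information. Since $K=A^*A$ is an exact compression of $\widehat K$, a non-redundant decomposition ($\widehat A(\vb k)$ injective for all $\vb k$) yields $K\ge\gamma$, hence no edge spectrum in $(0,\gamma)$ at all. In-gap edge modes therefore come only from redundancy of the decomposition. Moreover, zero contacts can only occur at momenta where the half-line relations spanned by $P$ fail to exhaust $\mathrm{Ker}\,A(k)$.
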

The proof of the lemma is in Section~\ref{sec:edge}. It immediately follows that
\begin{thm}
\label{thm:edge}
The edge index of a gapped, translation invariant F$^4$-Hamiltonian is 0. 
\end{thm}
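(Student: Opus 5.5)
The plan is to deduce Theorem~\ref{thm:edge} from Lemma~\ref{BranchDef} by a degree argument on the resolved curve $\mathcal C$ of~\eqref{eq:resolved-curve}. By Definition~\ref{def:Edge Index} and the remark following it, the edge index equals $\deg(f;\mu)$ for any regular $\mu\in(0,\gamma)$. Here $f:\mathcal C\to\mathbb R$, $(k,E)\mapsto E$, and $\mathcal C$ is a compact oriented one-dimensional manifold with boundary, each component carrying a multiplicity. We are free to compute this degree at a different value, and the idea is to push $\mu$ outside the range of $f$ on the interior of the curve.

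First I will record the structure supplied by Lemma~\ref{BranchDef}. Every branch of $\bar\Gamma$ with $E<\gamma$ extends as a nonnegative real analytic graph over a full neighbourhood in $k$, with multiplicities preserved, and $f(\partial\mathcal C)\subseteq\{\gamma\}$. The components of $\mathcal C$ are therefore of two kinds. Some are closed loops, obtained by gluing analytic pieces around $\mathbb S^1$ or through crossings, possibly with permuted labels. The others are arcs whose two endpoints both lie on the level $E=\gamma$. On each component $f$ is continuous, and on each arc $f$ takes the same value $\gamma$ at both ends.

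Next comes the degree computation, done component by component using additivity with multiplicities. On a component $C$, the signed count of preimages of a regular value $\mu$ is locally constant in $\mu$ away from $f(\partial C)\subseteq\{\gamma\}$. This is the standard invariance of degree, the same fact quoted from \cite{DubrovinNovikovFomenko1985}, applied on the interval $(-\infty,\gamma)$ where no boundary values occur. So $\deg(f|_C;\mu)$ is constant for $\mu<\gamma$. Since $f\ge 0$ on $\mathcal C$ by the nonnegativity in Lemma~\ref{BranchDef}, any $\mu<0$ has empty preimage, and the degree there is $0$. Hence $\deg(f|_C;\mu)=0$ for every regular $\mu\in(0,\gamma)$. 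Summing over components with multiplicities gives an edge index of $0$.

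The essential input, and the point where the argument could be delicate, is exactly that branches neither terminate at $E=0$ nor fail to extend past it. Otherwise a branch could leave through the essential spectrum, contributing an endpoint at $0$ and a nonzero net crossing. That work is contained in Lemma~\ref{BranchDef}, which I take as given. The remaining care concerns the passage from $\mu>0$ to $\mu<0$, which crosses $0$, a value $f$ may attain, for instance at a tangency of a branch with $E=0$. A value attained by $f$ need not be regular, but degree invariance only requires that $\mu$ avoid $f(\partial\mathcal C)$. If one prefers to avoid this, the same conclusion follows directly: orient each arc and note that the signed crossings of a level $\mu$ by a continuous path whose endpoints both lie above $\mu$ sum to zero. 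Closed loops likewise contribute zero.
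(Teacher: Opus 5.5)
Your proposal is correct and is essentially the paper's proof. From Lemma~\ref{BranchDef} you get $f(\partial\mathcal C)\subseteq\{\gamma\}$, so $\deg(f;\mu)$ is constant for $\mu<\gamma$, and it vanishes below $0$ because the branches are nonnegative. Your alternative crossing count on arcs and closed loops is the same bookkeeping that the paper carries out analytically, by applying the fundamental theorem of calculus to~\eqref{eq:edge_index}.
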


\begin{proof}
Let $\gamma$ be in the gap of $\sigma(H)$ such that all energies in $(0, \gamma)$ are regular. The degree $(0,\gamma)\ni \mu\mapsto \mathrm{deg}(f;\mu)$ is constant in any connected component of $\mathbb{R} \setminus f(\partial \bar{\Gamma})$. The claim of the theorem follows immediately from this and Lemma~\ref{BranchDef} since $\mathrm{deg}(f; \mu) = 0$ for $\mu <0$ and $0\notin f(\partial \bar{\Gamma})$. 
\end{proof}

It might be worth repeating the same reasoning analytically.  By Lemma~\ref{BranchDef} and its proof, branches that touch zero continue as real analytic functions of $k$. Cutting into finitely many analytic pieces, including at the seam $k=0=2\pi$, gives functions $E_l:D_l\to[0,\gamma]$ on closed intervals $D_l=[a_l,b_l]\subseteq[0,2\pi]$, counted with multiplicity, such that
$$
\bar{\Gamma} = \bigcup_l \bigcup_{k\in D_l} \big\{ (k, E_l(k)) \big\}.
$$
The fundamental theorem of calculus applied to the expression~(\ref{eq:edge_index}) of the edge index yields that the index is equal to
$$\sum_l\bigl[g(E_l(a_l))-g(E_l(b_l))\bigr].$$
At artificial cuts the endpoint contributions cancel, with multiplicity, even if analytic labels are permuted. By Lemma~\ref{BranchDef}, a genuine branch endpoint can only have energy $\gamma$, where $g(\gamma)=0$. Thus the edge index vanishes.
\begin{rem}
If the Hamiltonian has time-reversal symmetry, then the Hall conductance is zero but there exists a $\mathbb{Z}_2$ index, see e.g.~\cite{GrafPorta} for details on the setting and definitions. Informally, the edge $\mathbb{Z}_2$ index counts number of crossings modulo $2$. The proof we have just presented yields in that case that the $\mathbb{Z}_2$ index vanishes for gapped, translation-invariant F$^4$-Hamiltonians: Indeed, each branch that touches zero necessarily contributes even number of crossings.
\end{rem}

\begin{rem}
The relationship between locality and the Hall conductance has a rich history initiated by~\cite{arovas1988} in the literature, in particular in the free Fermion case. The statement of Proposition~\ref{prop:flat_band} is that, under a finite range assumption, a flat band at zero energy and a frustration-free ground state space are mathematically equivalent. Of course, this does not exclude that excited bands have non-trivial dispersion relations. 

Then our main result, Theorem~\ref{thm:edge}, states that a fully filled flat band has zero Hall conductance. The finite range assumption is essential here and not a mere technicality. If $P$ is \emph{any} translation-invariant Fermi projection with Fermi level in a spectral gap, in particular it may have a non-zero Hall conductance, then the `flattened Hamiltonian' $(1-P)$ has exponentially decaying matrix elements and it is frustration-free: This can be  seen in the second quantized formalism since the vacuum is a ground state for $\sum_{\vb{x},\vb{y}}(1-P)_{\vb{x},\vb{y}}a\str_{\vb{x}}a_{\vb{y}}$. A family of similar examples are Haldane's pseudo-potential Hamiltonians~\cite{HaldanePseudo} on thin cylinders, which have Laughlin's wavefunctions~\cite{Laughlin1983} as their frustration-free and gapped ground states, but with Gaussian-decaying rather than finite-range interactions and the Hall conductance is non-zero, see e.g.~\cite{FQHE}. 

A frustration free Hamiltonian with exponentially decaying tailes can be truncated with arbitrary precision, $\epsilon$, to a frustration free Hamiltonian with compactly supported terms. In such case, a new band with distance $\epsilon$ from the ground state band might appear. If the ground state band had non-zero Chern number then this new band have to appear and will carry the same Chern number leaving the ground state band with zero Chern number.

One may also contrast our result with the case of (translation-invariant) commuting projector Hamiltonians. Those are also known to have vanishing Hall conductance, first in the finite range case~\cite{kapustin2020local}, then in the case of superpolynomial decay~\cite{zhang2022vanishing}. For free fermions with Fermi projection $P$, this question is intimately related to the existence of a localized Wannier basis. Indeed, if there is an exponentially localized orthonormal basis $\{f_j\}_{j\in\bbN}$ of the range of $P$, then again $\sum_{j\in\bbN}a\str(f_j)a(f_j)$ is a frustration-free commuting projector Hamiltonian which has vanishing Hall conductance. In this case, relaxing the locality assumption allows for a transition to a non-trivial topological state: the localization-topology correspondence~\cite{monaco2018optimal} shows a sharp topological transition as a function of the polynomial decay rate of the Wannier functions. 
\end{rem}

\begin{rem}
Finally, let us emphasize the robustness of the edge approach that we are advertising in this note. While the bulk proof, which relies on the syzygy theorem requires full translation invariance, the edge approach does not at all. As we shall see in the next section, the partial Fourier transform $\mathcal{F}_2$ is used only to produced a real analytic family $k \mapsto \ft{H}_e(k)$ of half-line operators indexed by a circle. Everything else uses only Assumption~\ref{AssumptionF4} and the gap. The same structure can be obtained by defining the model on a cylinder using the identification $\ell^2(\bbN \times (\Zee/L\Zee), \Cee^N) \cong \ell^2(\bbN, \Cee^{NL})$, with the circle of half-line operators being produced by threading a flux, which is done by adding a phase $\exp(\pm i \varphi/L)$ to all hopping terms along the angular direction. This requires no translation invariance whatsoever. The outcome is that a gapped F$^4$ Hamiltonian on a cylinder has vanishing edge index, with no assumption of translation invariance, covariance or ergodicity; in particular the statement of Theorem~\ref{thm:edge} applies to each realization of a disordered model. In fact, the Hamiltonians may well be interacting. Now, for the arguments to hold verbatim, the spectrum of the flux Hamiltonians $\{H(\varphi):\varphi\in[0,2\pi]\}$ must be assumed to be such that $\sigma ( H(\varphi) ) \cap (0,\gamma) = \emptyset$ for all $\varphi$, in which case they all satisfy the F$^4$ condition if $H(0)$ does.

\end{rem}


\section{Analysis of Edge Modes}
\label{sec:edge}

We now turn to the technical heart of the paper, namely the proof of Lemma \ref{BranchDef}. We introduce an auxiliary Hamiltonian $H_a$, which is a bulk operator and can therefore be compared with $H$, but it corresponds to a system where two half-spaces have been disconnected, allowing for a relation with the edge Hamiltonian $H_e$.

\subsection{Slicing}
\label{subsec:edgesetup}
Let $H_a$ be the operator on $\ell^2(\mathbb{Z}^2,\Cee^N)$ defined by
\begin{equation}
    H_a = \sum_{\vb{x} \in \mathbb{Z}^2, x_1 \neq 0} H_{\vb{x}}.
\end{equation}

With our choice of labeling of the interaction terms $H_{\vb{x}}$, the subspace $\ell^2(\mathbb{N} \times \mathbb{Z}, \Cee^N)$ is an invariant subspace of $H_a$ and the restriction of $H_a$ to this subspace is the edge Hamiltonian $H_e$.

If $H$ is translation-invariant, then $H_a$ retains invariance in the vertical direction. As for the edge Hamiltonian, the partial Fourier transform $\mathcal{F}_2$ yields the family of operators on $\ell^2(\Zee,\Cee^N)$
\begin{equation}
    \label{eq:aux_H_frame_vec}
    \widehat H_a(k) = \sum_{x_1 \neq 0,\, j}  \widehat{v}_{(x_1, 0), j}(k) \,  \widehat{v}_{(x_1, 0), j}(k)^*
\end{equation}
parametrized by $k\in \mathbb{S}^1$. For every $k\in \mathbb{S}^1$, this is a finite rank perturbation of the partially Fourier transformed bulk Hamiltonian
\begin{equation}
    \widehat H(k) = \sum_{x_1\in \Zee,\, j}  \widehat{v}_{(x_1, 0), j}(k) \,  \widehat{v}_{(x_1, 0), j}(k)^*.
\end{equation}
In fact, $\widehat H_a(k)\leq \widehat H(k)$ since the difference $\sum_{j=1}^M  \widehat{v}_{(0, 0), j}(k) \,  \widehat{v}_{(0, 0), j}(k)^*$ is non-negative and has rank at most $M$.
It follows that the spectrum of $\widehat  H_a(k)$ has at most finitely many eigenvalues in the gap $(0, \gamma)$. Again by analytic perturbation theory,
 $$\Gamma_a = \{ (k, E_a) \in \mathbb{S}^1 \times (0, \gamma) : E_a \in \sigma(\ft{H}_a(k)) \}, $$ 
is locally a finite union of graphs of real analytic functions of $k$. As in~\eqref{eq:resolved-curve}, let $\mathcal C_a$ denote the resolved branches, including the continuations through zero supplied below, and let $f_a:\mathcal C_a\to\mathbb R$ be the height map. We have the following analog of Lemma~\ref{BranchDef}:
\begin{lemma}\label{lem:Ha}
Let $H$ be a gapped, translation-invariant F$^4$ Hamiltonian and let $\Gamma_a$ be as above. At every point with $E<\gamma$, including $E=0$, $\bar\Gamma_a$ is locally a finite union of graphs of nonnegative real analytic functions of $k$ on a full neighborhood of that momentum. Positive-energy branches retain their multiplicities through zero. In particular, $f_a(\partial\mathcal C_a)\subseteq\{\gamma\}$.
\end{lemma}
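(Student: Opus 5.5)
The plan is to remove the infinitely degenerate zero eigenvalue, which is the only thing that obstructs perturbation theory at $E=0$, by using frustration-freeness. On the complement of the bulk kernel, $\widehat H_a(k)$ has no spectrum near zero except discrete eigenvalues. Analytic perturbation theory (Rellich) then applies uniformly up to and through $E=0$.

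\emph{Step 1: the bulk kernel projection is analytic.} Since $k\mapsto\widehat H(k)$ is a trig-polynomial family of bounded operators on $\ell^2(\Zee,\Cee^N)$, it is a bounded real-analytic self-adjoint family. By the gap assumption, $\sigma(\widehat H(k))\subseteq\{0\}\cup(\gamma,\infty)$ for every $k$. The Riesz projection $P(k)=\frac{1}{2\pi i}\oint_{|z|=\gamma/2}(z-\widehat H(k))^{-1}dz$ onto $\Ker\widehat H(k)$ is therefore real analytic in $k$, with $Q(k)=1-P(k)$. Kato's transformation function gives an analytic family of unitaries $U(k)$ with $U(k)Q(k_0)U(k)^{-1}=Q(k)$ near any $k_0$.

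\emph{Step 2: frustration-freeness decouples the kernel.} Since $0\le\widehat H_a(k)\le\widehat H(k)$, we have $\Ker\widehat H(k)\subseteq\Ker\widehat H_a(k)$. Hence $\widehat H_a(k)P(k)=0$, and by self-adjointness $\widehat H_a(k)=Q(k)\widehat H_a(k)Q(k)$. So $\widehat H_a(k)=0\oplus A(k)$ with respect to $\mathrm{Ran}P(k)\oplus\mathrm{Ran}Q(k)$, where $A(k)=Q(k)\widehat H_a(k)Q(k)|_{\mathrm{Ran}Q(k)}\ge0$. Every eigenvalue in $(0,\gamma)$ with its multiplicity, and hence every point of $\Gamma_a$, is an eigenvalue of $A(k)$. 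Moreover,
\[
A(k)=Q\widehat H(k)Q-\sum_j Q\,\widehat v_{(0,0),j}(k)\widehat v_{(0,0),j}(k)^*Q .
\]
The first term is $\ge\gamma$ on $\mathrm{Ran}Q(k)$ and the second has rank at most $M$. By Weyl's theorem $\sigma_{\mathrm{ess}}(A(k))\subseteq[\gamma,\infty)$, so $\sigma(A(k))\cap[0,\gamma)$ consists of finitely many isolated eigenvalues of finite multiplicity, now including $0$ itself.

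\emph{Step 3: Rellich.} Conjugate to the fixed space: $\tilde A(k)=U(k)^{-1}A(k)U(k)$ on $\mathrm{Ran}Q(k_0)$. This is a bounded self-adjoint real-analytic family near $k_0$. For any $E_0\in[0,\gamma)$, choose a small circle around $E_0$ avoiding $\sigma(\tilde A(k_0))$. The total spectral projection inside is analytic and of finite constant rank nearby. Rellich's theorem then gives finitely many real-analytic eigenvalue branches, defined on a full neighbourhood of $k_0$ and counted with multiplicity, whose union is $\sigma(\tilde A(k))$ near $E_0$. These branches are nonnegative because $A(k)\ge0$. Taking $E_0=0$: every point $(k_0,0)\in\bar\Gamma_a$ is a limit of eigenvalues of $A(k)$ in $(0,\gamma)$, hence lies on these branches. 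Conversely, $\bar\Gamma_a$ near $(k_0,0)$ is exactly the union of those branches that are not identically zero. A branch vanishing at isolated $k$ is continued through $E=0$ with the multiplicity it carries for nearby positive energies, since the multiplicities are those of the Rellich branches. Branches identically $0$ do not meet $\Gamma_a$ and are discarded. Consequently, no branch endpoint lies in $[0,\gamma)$; endpoints in $(0,\gamma)$ are already excluded by ordinary perturbation theory. This gives $f_a(\partial\mathcal C_a)\subseteq\{\gamma\}$.

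The main obstacle I expect is Step 2, that is, making sure that $0$ is genuinely not in the essential spectrum of the reduced operator. This is exactly where frustration-freeness enters, through $\Ker\widehat H\subseteq\Ker\widehat H_a$. Without it, $\widehat H_a$ would mix the infinite-dimensional kernel with the rank-$M$ perturbation, and branches could dissolve into the band at $E=0$. A secondary bookkeeping point is checking that limit points of $\Gamma_a$ at $E=0$ all come from $A(k)$ rather than from the discarded $P(k)$ block. This holds because $\widehat H_a$ vanishes identically on $\mathrm{Ran}P(k)$, so that block contributes no positive eigenvalues.
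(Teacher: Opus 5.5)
Your proposal is correct and follows essentially the same route as the paper. Both arguments transport away the analytic bulk kernel projection $P(k)$, use $\widehat H_a(k)P(k)=0$ (frustration-freeness) to reduce to a finite-rank perturbation of an operator with spectrum in $[\gamma,\infty)$, and apply analytic perturbation theory through $E=0$, discarding identically vanishing branches; the only cosmetic differences are that you use local Kato transformation functions near each $k_0$ instead of the single transport $U(k)$ on $[0,2\pi]$, and you get the kernel inclusion fiberwise from $0\le\widehat H_a(k)\le\widehat H(k)$.
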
 
\begin{proof}
Consider the projection $P(k)$ onto the kernel of the bulk Hamiltonian $\ft{H}(k)$. By assumption, $0$ is an isolated point in the spectrum of the real-analytic $\ft{H}(k)$ for all $k\in\mathbb{S}^1$, and so  $k\mapsto P(k)$ is real-analytic. The unitary family $U(k)$ solving $\dot U(k) = [\dot P(k),P(k)]U(k)$ with $U(0) = \mathbb{I}$ is such that $U(k)^* P(k) U(k) = P(0)$ for all $k\in [0,2\pi]$, and it is also real-analytic. It may not be $2\pi$-periodic, and this is irrelevant for the argument. The family of Hamiltonians $\tilde{H}(k) = U(k)^* \ft{H}(k) U(k)$ annihilates $\mathrm{Ran}(P(0))$, and by the gap assumption
\begin{equation*}
\sigma\Big(\tilde{H}(k)\vert_{1 - P(0)}\Big)\subseteq[\gamma,\infty).
\end{equation*}
Recalling that $H_a$ is obtained from $H$ by removing some interaction terms, the frustration-free condition implies that $\Ker(H_a)\supseteq\Ker(H)$ and so $\ft{H}_a(k) P(k) = 0$. This implies that $\mathrm{Ran}(P(0))\subseteq\Ker(\tilde{H}_a(k))$, where
$$
\tilde{H}_a(k) = U(k)^* \ft{H}_a(k) U(k)
$$
for all $k\in[0,2\pi]$. In turn, the eigenvalues of $\ft{H}_a(k)$ in $(0,\gamma)$ are in the spectrum of $\tilde{H}_a(k)\vert_{1-P(0)}$. \par
The non-negative operator $\tilde{H}_a(k)| _{1-P(0)}$ is a finite rank perturbation of $\tilde{H}(k)| _{1-P(0)}$, and the latter has no spectrum in $(-\infty,\gamma)$.
It follows from perturbation theory that for each $k\in \mathbb{S}^1$, the spectrum of $\tilde{H}_a(k)| _{1-P(0)}$ has at most a finite number of eigenvalues in $[0, \gamma)$, and none in $(-\infty,0)$. As already argued before, perturbation theory implies that these eigenvalues are locally analytic functions, which we denote $\tilde E_{a,j}(k)$. For any $k\in\mathbb{S}^1$, the spectrum of $\tilde{H}_a(k)| _{1-P(0)}$ below $\gamma$ is the union of $\tilde E_{a,j}(k)$. It follows that the set
$$
\tilde{\Gamma}_a = \{ (k, \tilde E_a) \in \mathbb{S}^1 \times (-\infty, \gamma) : \tilde E_a \in \sigma(\tilde{H}_a(k)| _{1-P(0)}) \}
$$
is locally a finite union of graphs of real analytic functions of $k$, including at zero. Its branches can terminate only at the cutoff $E=\gamma$, since $\tilde H_a(k)|_{1-P(0)}$ has no essential spectrum below $\gamma$.

Now, the set $\Gamma_a$ is a subset of $\tilde{\Gamma}_a$ that consists of all points $(k, E)$ in $\tilde{\Gamma}_a$ for which $E>0$. Since $\tilde{\Gamma}_a$ is closed in $\mathbb{S}^1 \times (-\infty, \gamma)$, the closure $\bar{\Gamma}_a\cap (\mathbb{S}^1 \times (-\infty, \gamma))$ is also a subset of $\tilde{\Gamma}_a$. We now analyze the points in $\tilde{\Gamma}_a$ such that $E=0$ (which may or may not belong to $\bar\Gamma_a$). If $(k_*,0)\in \tilde{\Gamma}_a$, then locally, there are $\tilde E_{a,j}(k)$ such that $\tilde E_{a,j}(k_*)=0$. Recalling that $\tilde{H}_a(k)| _{1-P(0)}$ is non-negative, analyticity implies that either $\tilde E_{a,j}(k)>0$ for all $k\neq k_*$ in a neighborhood of $k_*$ or $\tilde E_{a,j}(k)$ vanishes identically. In the former case, the branch belongs to $\bar\Gamma_a$ on both sides of $k_*$ and extends through $(k_*,0)$ as a nonnegative real analytic graph. Identically-zero branches do not themselves contribute to the closure of the positive spectrum. Discarding those branches therefore leaves, near every point of $\bar\Gamma_a$ at zero energy, a finite union of full analytic graphs with their continued multiplicities. No such branch terminates there, and hence $f_a(\partial\mathcal C_a)\subseteq\{\gamma\}$.
\end{proof}
A consequence of the lemma and of its proof is that if we pick $\gamma$ small enough so that all energies in $(0, \gamma]$ are regular, then there exists a finite family of analytic functions $E_{a,l}:D_{a,l}\to [0,\gamma]$ defined on closed intervals $D_{a,l} \subseteq \mathbb{S}^1$, possibly $D_l = \mathbb{S}^1$, such that
$$
\bar{\Gamma}_a = \bigcup_l \bigcup_{k \in D_{a,l}} \big\{(k, E_{a,l}(k)) \big\}.
$$
The corresponding continued finite-rank eigenprojections $P_{a,l}(k)$, obtained from the reduced problem and conjugated back by $U(k)$, are analytic. At crossings and zero contacts they need not be full spectral projections of $\ft H_a(k)$. As pointed out in the proof, $\ft{H}_a(k)$ has, in general, additional `zero modes' $E(k)=0$.  

\subsection{Proof of Lemma \ref{BranchDef} }

We decompose $\ell^2(\mathbb{Z}, \mathbb{C}^N) = \ell^2(\mathbb{N}, \Cee^N) \oplus \ell^2(\mathbb{Z} \setminus \mathbb{N}, \mathbb{C}^N)$. By construction,
$$
\ft{H}_a(k) = \ft{H}_e(k) \oplus L(k),
$$
from which we immediately conclude that $\Gamma_a = \Gamma\cup \Gamma_L$, and so
$$
\bar{\Gamma}_a = \bar{\Gamma} \cup \bar{\Gamma}_L.
$$
Here, $\Gamma$ was defined in~\eqref{eq:Gamma}, and $L$ and $\Gamma_L$ are obtained from $H$ as are $H_e$ and $\Gamma$. Note that this equality does not give $\partial\bar\Gamma\subseteq\partial\bar\Gamma_a$: a branch of $\Gamma_L$ through a point of $\partial\bar\Gamma$ would change the parity of its branching number in $\bar\Gamma_a$. This is why the proof below identifies branches rather than endpoint sets.

Let $Q$ be the multiplication operator in $\ell^2(\mathbb{Z}, \mathbb{C}^N)$ by the characteristic function of the right half-space. The operator
\begin{equation*}
\bar H_e(k) = \ft{H}_a(k) Q = Q \ft{H}_a(k) Q
\end{equation*}
is the extension of $\ft{H}_e(k)$ by $0$ from $\ell^2(\bbN,\mathbb{C}^N)$ to $\ell^2(\bbZ,\mathbb{C}^N)$. Since $\ft{H}_a(k)$ commutes with $Q$, so do its spectral projections and in particular, for all $l$, $\bar P_l(k) = QP_{a,l}(k)$ is a projection, possibly zero. If it is nonzero, then it is a spectral projection of $\bar H_e(k)$, and any vector in its range is supported on the right half-line. It follows that $\bar P_l(k)= P_l(k)$ up to a relabeling of the branches which can be made consistently by continuity, and any spectral projector of $\ft{H}_e(k)$ is obtained in this way. It follows that $E_l(k)=E_{a,l}(k)$ for any branch of eigenvalues $E_l(k)$ of $\ft{H}_e(k)$. Since the branches $k\mapsto E_{a,l}(k)$ terminate only at energy $\gamma$ by Lemma~\ref{lem:Ha}, so do $k\mapsto E_l(k)$. \teop

Finally, we note that the proof shows that, as already used in Section~\ref{subsec:edge_index}, the number of branches of $\widehat H_e$ in $(0,\gamma)$ is finite.

\section{Example with gapless edge states}
\label{sec:example}
In this section, we construct an $F^4$ Hamiltonian, which has vanishing index by the previous sections, but has nevertheless a gapless edge spectrum.

We pick $N=2$, $M=2$, and let $e=\begin{pmatrix}1 \\0\end{pmatrix}\in\bbC^2$. Let $X = e^{ik_1}, Y = e^{ik_2}$ and
\begin{equation}\label{eq:ex}
  \ft{v}_1(\vb{k}) = \bigl(1+2X\bigr)\,e,
  \quad
  \ft{v}_2(\vb{k}) = \bigl(1+3X+XY\bigr)\,e,
\end{equation}
In real space this is $H=\sum_{\vb{x}\in\bbZ^2}H_{\vb{x}}$, $H_{\vb{x}}=\sum_{j=1,2}|v_{j,\vb{x}}\rangle\langle v_{j,\vb{x}}|$
with $v_{1,\vb{x}}= \delta_{\vb{x}}\otimes e + 2\,\delta_{\vb{x}+e_1}\otimes e$ and
$v_{2,\vb{x}}= \delta_{\vb{x}}\otimes e+3\,\delta_{\vb{x}+e_1}\otimes e+\delta_{\vb{x}+e_1+e_2}\otimes e$. The model is frustration-free since $\ket{v_{1}},\ket{v_2}$ are both colinear with $e$ and so $\Ker H \supseteq l^2(\bbZ^2)\otimes e^\perp$.

In Fourier space, $\ft{H}(\vb{k})=\left(|f_1(k_1)|^2+|f_2(\vb{k})|^2\right)\,ee^*$ with
\begin{equation*}
   f_1(X) = 1+2X, \qquad f_2(X,Y)= 1+(3+Y)X,
\end{equation*}
and the problem is reduced to the analysis of the two polynomials $f_1,f_2$. Now, $\ft{H}(\vb{k})$ has rank $1$ for all $\vb{k}$ with eigenvalue branch 
\begin{equation*}
\lambda(\vb{k}) = |f_1(k_1)|^2+|f_2(\vb{k})|^2 = 16 + 10\cos(k_1) + 6\cos(k_2) + 2\cos(k_1 + k_2).
\end{equation*}
Hence the bulk gap is $2$, reached at $\vb{k} = (\pi,\pi)$.

\subsection*{Edge spectrum}
Restrict to the half plane $x_1 > 0$ and Fourier transform in the vertical direction. The $e$-component of the resulting operator is the Jacobi matrix on $\ell^2(\bbN)$
\begin{equation}\label{eq:jacobi}
  \ft{H}_e(k_2) =
  \begin{pmatrix}
    2      & \bar R   &        &  \\
    R      & 2+Q      & \bar R &  \\
           & R        & 2+Q    & \ddots\\
           &          & \ddots & \ddots
  \end{pmatrix},
\end{equation}
where $Q = 14 + 6\cos(k_2)$ and $R = 5 + e^{ik_2}$. It has a constant `potential' $2+Q$, constant hoppings $R$ and a boundary defect $-Q$. In order to find the spectrum, we conjugate by the diagonal matrix $R^n/\vert R\vert^n\delta_{n,m}$ to get the unitarily equivalent matrix with $R,\bar R$ replaced by $r=\vert R\vert$, in particular purely real. For $n\geq2$, the recursion relation $r\psi_{n-1} + q\psi_n + r\psi_{n+1}=E\psi_n$ (here $q = 2+Q$), is solved by
\begin{equation}\label{eq:EV half space}
\psi_n = z^n,\qquad E = q + r(z + z^{-1})
\end{equation}
The solution is non-trivial and square summable if and only if $0<\vert z\vert<1$ which is mapped bijectively under the Joukowski map to $\mathbb{C}\setminus[-2,2]$. The boundary condition reads $2\psi_1 + r\psi_2 = E\psi_1$, namely
\begin{equation*}
z = \frac{r}{2-q} = -\frac{\vert R\vert}{Q}
\end{equation*}
The unique bound state exists if and only if $Q>\vert R\vert$, which is the case for all $k_2$, in which case
\begin{equation}\label{eq:branch}
  E(k_2) = 2-\frac{|R|^2}{Q}=\frac{2+2\cos k_2}{14+6\cos k_2}.
\end{equation}
We immediately check that $0\leq E(k_2)\leq \frac{1}{5}<2$, and that it touches the flat band at $k_2 = \pi$: The edge is gapless, while the bulk gap is $\gamma=2$. Note that the bulk band can be recovered by picking $\vert z\vert = 1$ in~\eqref{eq:EV half space}, and recalling that the unit circle is mapped to the interval $[-2,2]$ to obtain the interval $[2+Q-2\vert R\vert, 2+Q+2\vert R\vert]$.

The zero mode at $k_2 = \pi$ can be computed directly from~\eqref{eq:jacobi}, and is such that $\psi_{n} = (-1/2)^{n-1}\psi_1$ for all $n\in\mathbb{N}$ --- within the $e$-component. It is indeed at $0$ energy and exponentially localized at the edge, namely a gapless edge mode, but carries no current along it and the edge index vanishes. Recalling the definition of the boundary $\partial\mathcal C$, this branch does not terminate or arise at $k_2 = \pi$, and so $\bar\Gamma$ is a single closed loop below $\gamma$, see Figure~\ref{fig:touching}.

\begin{figure}[h]
\begin{center}
\begin{tikzpicture}[scale=0.75]
\begin{scope}
  \def\bandlo{16+6*cos(deg(\x)) - 2*sqrt(26+10*cos(deg(\x)))}
  \def\bandhi{16+6*cos(deg(\x)) + 2*sqrt(26+10*cos(deg(\x)))}
  \def\edge{(2+2*cos(deg(\x)))/(14+6*cos(deg(\x)))}
  \begin{scope}[xscale=6/6.2832, yscale=5/12]
    \clip (0,0) rectangle (6.2832,12);
    \fill[bandgrey!22]
      plot[domain=0:6.2832,samples=200] (\x,{\bandlo}) --
      (6.2832,12) -- (0,12) -- cycle;
    \draw[bandgrey,very thick,domain=0:6.2832,samples=200] plot (\x,{\bandlo});
    \draw[bandgrey,very thick,domain=0:6.2832,samples=200] plot (\x,{\bandhi});
    \draw[gammapurple,thick] (0,2) -- (6.2832,2);
    \draw[edgeblue,very thick,domain=0:6.2832,samples=300] plot (\x,{\edge});
  \end{scope}
  \draw[->] (0,0) -- (6.5,0) node[right] {$k_2$};
  \draw[->] (0,0) -- (0,5.4) node[above] {$E$};
  \foreach \p/\l in {0/$0$, 3/$\pi$, 6/$2\pi$} \draw (\p,0) -- (\p,-0.08) node[below]{\l};
  \draw (0,{5*2/12}) -- (-0.08,{5*2/12}) node[left]{\textcolor{gammapurple}{$\gamma=2$}};
\end{scope}
\begin{scope}[xshift=8.6cm]
  \def\edge{(2+2*cos(deg(\x)))/(14+6*cos(deg(\x)))}
  \begin{scope}[xscale=6/6.2832, yscale=5/0.25]
    \clip (0,0) rectangle (6.2832,0.25);
    \draw[edgeblue,very thick,domain=0:6.2832,samples=400] plot (\x,{\edge});
  \end{scope}
  \draw[->] (0,0) -- (6.5,0) node[right] {$k_2$};
  \draw[->] (0,0) -- (0,5.4) node[above] {$E$};
  \foreach \p/\l in {0/$0$, 3/$\pi$, 6/$2\pi$} \draw (\p,0) -- (\p,-0.08) node[below]{\l};
  \draw (0,{5*0.2/0.25}) -- (-0.08,{5*0.2/0.25}) node[left]{\small $\tfrac15$};
  \draw[dashed,bandgrey] (3,0) -- (3,5.2);
  \node[edgeblue,anchor=west] at (4.3,{5*0.10/0.25}) {\small $E(k_2)$};
\end{scope}
\end{tikzpicture}
\end{center}
\caption{The bulk band (grey), the gap threshold $\gamma=2$ (purple) and the edge branch \eqref{eq:branch} (blue) on the left. The latter is scaled on the right. The edge branch touches $E=0$ tangentially at $k_2=\pi$.
}
\label{fig:touching}
\end{figure}
 
\section{Bulk Index}\label{sec:bulk}
We turn to the proof of Theorem~\ref{thm:Bulk}. We define the excited band bundle to be the subset of $\Tee^2 \times \Cee^N$
    \begin{equation}
      E=\bigsqcup_{\vb{k}\in \Tee^2} \mathrm{Im} \big(\ft{H}(\vb{k})\big) 
    \end{equation}
    with the subset topology. Similarly the ground state bundle is
        \begin{equation}
      G=\bigsqcup_{\vb{k}\in \Tee^2} \mathrm{Ker}\big(\ft{H}(\vb{k})\big) 
    \end{equation}
The bulk index is the first Chern number of bundle $G$ and it is equal to the Hall conductance, see~\cite{TKNN,A.S.S.ChernNumber}.

The key input in the proof is that by the gap condition, the dimension of kernel and of the image of $\widehat{H}(\vb{k})$ are constant over $\vb{k} \in \Tee^2$. So let
\begin{equation} \label{ConstRank}
M=\mathrm{Rank}_\Cee\left(\ft{H}(\vb{k})\right),
\end{equation}
namely, E is a rank $M$ complex vector bundle.

Since $\mathbb{C}^N = \mathrm{Ker}(\ft{H}(\vb{k})) \oplus \mathrm{Im}(\ft{H}(\vb{k}))$, the bundle $G\oplus E$ is the trivial bundle, in particular its Chern number vanishes. Since Chern number is additive, we can equivalently study $G$ or $E$ and it will be more convenient to  focus on $E$. Rather than studying $E$ directly, we will consider $\ft{H}(\vb{k})$ as the evaluation of a matrix with entries in the ring of Laurent polynomials. Let 
$$X=e^{ik_1},\; Y=e^{ik_2} \text{ and } R=\Cee\left[X^{\pm1},Y^{\pm1}\right].$$
Seeing $\widehat{H}$ is $N \times N$ matrix over $R$, we define
$$\mathrm{Im}_R(\widehat{H})=\left\{\widehat{H} \psi:\psi \in R^N\right\},$$ 
which is a finitely generated $R-$module. 

With this, the proof of Theorem \ref{thm:Bulk} will develop as follows: First, by Hilbert's syzygy theorem, there is an exact sequence of $R-$modules ending at the module $\mathrm{Im}_R(\widehat{H})$; Next, when evaluating the polynomials at $X=e^{ik_1},Y=e^{ik_2}$, the exact sequence of $R-$module descends to a family, parametrized by $\vb{k}\in\Tee^2$, of exact sequences of $\Cee$-vector spaces; Finally, we use the exactness to show that $E$ is a successive quotient of trivial bundles from trivial bundle.

\subsection{Proof of Theorem \ref{thm:Bulk}}

By the syzygy theorem \cite{rotman}, the finitely generated $R-$module $\mathrm{Im}_R(\widehat{H})$ over the ring $R$ has a free resolution:  
\begin{equation}
0 \xrightarrow{h_{3}} R^{N_2} \xrightarrow{h_{2}} R^{N_1} \xrightarrow{h_{1}} R^N \xrightarrow{h_0=\ft{H}} \mathrm{Im}_R(\widehat{H}) \to 0,\label{Syzygy}
\end{equation}
namely, $\Ker_R(h_j)=\mathrm{Im}_R(h_{j+1})$ and $h_j$ are $R$-module homomorphisms, namely matrices with entries in~$R$.

\begin{rem}[Meaning of the exact sequence]\label{rem:sequence-meaning}
Under Fourier transform, an element of $R^N$ corresponds to a compactly supported one-particle wavefunction, and multiplication by a monomial $X^mY^n$ translates that wavefunction. The columns $a_\alpha=\widehat H e_\alpha$ are therefore the Fourier transforms of $H\delta_{0,\alpha}$. The module $\operatorname{Im}_R(\widehat H)$ consists of finite linear combinations of their translates, or equivalently of vectors $H\psi$ with $\psi$ compactly supported.

The generators $a_\alpha$ need not be independent. A relation among them is a vector $p=(p_\alpha)\in R^N$ satisfying
\[
\sum_{\alpha=1}^N p_\alpha a_\alpha=0,
\qquad\text{equivalently}\qquad \widehat H p=0.
\]
Thus these relations are precisely the compactly supported zero modes. If $w_j$ are the columns of $h_1$, exactness says that every such zero mode is a finite linear combination of translates of the $w_j$.

The $w_j$ may themselves be redundant. The columns of $h_2$ generate all relations $\sum_j q_jw_j=0$ with $q_j\in R$, and injectivity of $h_2$ means that there are no further relations among these chosen generators of relations. In this sense, the sequence records local generators, their dependencies, and the dependencies among those dependencies. The argument below is still needed to show that evaluation at each momentum preserves exactness.
\end{rem}

\begin{lemma}
    \label{ConstRank}
    Assume that $\mathrm{Rank}_\Cee (h_j(\vb{k}))=M_j$ is constant for all $\vb{k}$. Then the rank $h_{j+1}(\vb{k})$ is also constant and full, namely 
    $$\mathrm{Rank}_\Cee(h_{j+1}(\vb{k}))=\mathrm{Null}_\Cee(h_{j}(\vb{k}))=N_j-M_j,$$
    where $\mathrm{Null}_\Cee(h_{j}(\vb{k}))$ is the dimension of the kernel of $h_{j}(\vb{k})$.
\end{lemma}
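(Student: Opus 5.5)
The plan is to compare ranks pointwise and to get the nontrivial inequality from exactness of~\eqref{Syzygy} after localizing at the maximal ideal of the point $\vb{k}$. Since $h_j h_{j+1}=0$ over $R$, evaluation gives $h_j(\vb{k})h_{j+1}(\vb{k})=0$ for every $\vb{k}$. Hence $\mathrm{Im}\,h_{j+1}(\vb{k})\subseteq\Ker{h_j(\vb{k})}$, and
\[
\mathrm{Rank}_\Cee(h_{j+1}(\vb{k}))\leq N_j-M_j .
\]
The real content is the reverse inclusion $\Ker{h_j(\vb{k})}\subseteq \mathrm{Im}\,h_{j+1}(\vb{k})$ at every single point. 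Generic-rank arguments do not give this: they give equality only off a measure-zero set, and the rank might a priori drop there. This is the step I expect to be the main obstacle. It is the only place where the constant-rank hypothesis enters in an essential way.

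\textbf{Step 1: constant rank near the torus.} First I upgrade the hypothesis from the real torus to an algebraic neighbourhood. Every $(M_j+1)$-minor of $h_j$ is a Laurent polynomial vanishing on all of $\Tee^2$. By the argument in the proof of Proposition~\ref{prop:flat_band}, such a minor is the zero polynomial, so $\mathrm{Rank}\,h_j\leq M_j$ on all of $(\Cee^*)^2$. At the given point $\vb{k}_0$, some $M_j$-minor $\delta$ is nonzero. Let $\mathfrak m\subset R$ be the maximal ideal of $(X,Y)=(e^{ik_{0,1}},e^{ik_{0,2}})$. Then $\delta$ is a unit in the local ring $R_{\mathfrak m}$. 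In the language of Fitting ideals, over $R_{\mathfrak m}$ the ideal of $M_j$-minors of $h_j$ is the unit ideal and the ideal of $(M_j+1)$-minors is zero.

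\textbf{Step 2: splitting over $R_{\mathfrak m}$.} By the standard Fitting-ideal criterion, the conditions of Step~1 imply that $\mathrm{coker}(h_j)_{\mathfrak m}$ is free of rank $N_{j-1}-M_j$. Concretely, one can row/column reduce $h_j$ over $R_{\mathfrak m}$ to $\mathrm{diag}(\mathbb I_{M_j},0)$, using invertibility of $\delta$ and vanishing of the larger minors. The exact sequence $0\to\mathrm{Im}(h_j)_{\mathfrak m}\to R_{\mathfrak m}^{N_{j-1}}\to\mathrm{coker}\to0$ therefore splits. So $\mathrm{Im}(h_j)_{\mathfrak m}$ is projective, hence free of rank $M_j$.

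Since localization is exact, \eqref{Syzygy} gives the exact sequence $0\to K\to R_{\mathfrak m}^{N_j}\to \mathrm{Im}(h_j)_{\mathfrak m}\to 0$, where $K=\Ker{(h_j)_{\mathfrak m}}=\mathrm{Im}(h_{j+1})_{\mathfrak m}$. This sequence splits, so $K$ is a free direct summand of rank $N_j-M_j$.

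\textbf{Step 3: evaluation at $\vb{k}_0$.} Tensor with the residue field $R_{\mathfrak m}/\mathfrak m\cong\Cee$; this is exactly evaluation at $\vb{k}_0$. Split exact sequences stay exact under tensoring, so three facts follow:
\begin{itemize}
\item $K\otimes\Cee$ injects into $\Cee^{N_j}$ with image $\Ker{h_j(\vb{k}_0)}$.
\item Both $K\otimes\Cee$ and $\Ker{h_j(\vb{k}_0)}$ have dimension $N_j-M_j$.
\item The surjection $h_{j+1}:R_{\mathfrak m}^{N_{j+1}}\to K$ stays surjective after tensoring, by right exactness.
\end{itemize}
Hence $h_{j+1}(\vb{k}_0)$ maps onto $\Ker{h_j(\vb{k}_0)}$. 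Combined with the inclusion from the first paragraph, this gives $\mathrm{Rank}_\Cee(h_{j+1}(\vb{k}_0))=N_j-M_j$.

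Since $\vb{k}_0\in\Tee^2$ was arbitrary, the rank of $h_{j+1}$ is constant and equal to $\mathrm{Null}_\Cee(h_j(\vb{k}))$, and the evaluated sequence is exact at $R^{N_j}$. This is also what is needed to iterate the lemma down the resolution, starting from $h_0=\ft H$ of constant rank $M$ by the gap assumption.
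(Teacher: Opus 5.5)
Your proof is correct. It rests on the same two inputs as the paper's proof. First, the $(M_j+1)$-minors of $h_j$ vanish in $R$, because they vanish on $\Tee^2$. Second, some $M_j$-minor is nonzero at $\vb k_0$.

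What differs is how these inputs are used. The paper works over the fraction field $\bbF$. Proposition~\ref{prop:rank} gives $\mathrm{Rank}_\bbF(h_j)=M_j$, which kills the Schur complement in \eqref{eq:three}. The paper then writes down explicit elements $W_l\in\mathrm{Ker}_R(h_j)$, clearing denominators with $\Delta=\det i$ (which equals $1$ at $\vb k_0$). Evaluating them gives $N_j-M_j$ independent vectors of $\mathrm{Ker}\,h_j(\vb k_0)$, and exactness over $R$ itself lifts them through $h_{j+1}$.

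You instead localize at the point, where that same minor is already a unit. So the Schur-complement reduction happens over $R_{\mathfrak m}$ with no denominators to clear. The kernel appears as a free direct summand, and right exactness of $-\otimes_{R_{\mathfrak m}}\Cee$ does the lifting abstractly. Identifying the image of $K\otimes\Cee$ with $\mathrm{Ker}\,h_j(\vb k_0)$ also needs the cokernel splitting, so that $\mathrm{Im}(h_j)_{\mathfrak m}\otimes\Cee\to\Cee^{N_{j-1}}$ stays injective. You did set this up.

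Your version is shorter and shows the lemma is an instance of the Fitting-ideal criterion for local freeness. It makes plain that exactness survives evaluation because the modules involved are free near $\vb k_0$. The paper's version avoids localization and Fitting ideals and is self-contained. It also produces global compactly supported zero modes whose evaluations span $\mathrm{Ker}\,h_j(\vb k_0)$, matching the interpretation in Remark~\ref{rem:sequence-meaning}.
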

Since $\mathrm{Im}_R(h_{j+1})= \Ker_R(h_j) \implies \mathrm{Im}_\Cee(h_{j+1}(\vb{k})) \subseteq \Ker_\Cee(h_j(\vb{k})) $, the important conclusion of this lemma is that the proper inclusion, which would have $\text{Rank}_\Cee(h_{j+1}(\vb{k})) < \text{Null}_\Cee(h_j(\vb{k}))$, does not happens. We'll delay the proof of this lemma. 

By the gap assumption, $\ft{H}(\vb{k})$ does have constant rank $M_0=M$. Applying the lemma inductively, we conclude that $\text{Rank}_\Cee(h_{j+1}(\vb{k}))=\text{Null}_\Cee(h_{j}(\vb{k}))=N_j-M_j$ for all $j$, and hence we have the following exact sequence of vector spaces:
\begin{lemma}
     \label{CElemma}
     The sequence of $\Cee$-vector spaces obtained by the evaluation of~(\ref{Syzygy}) at \( \forall \vb{k} \in \Tee^{2} \)
   \begin{equation}
     0 \xrightarrow{h_3(\vb{k})} \Cee^{N_2} \xrightarrow{h_2(\vb{k})} \Cee^{N_1} \xrightarrow{h_{1}(\vb{k})} \Cee^N \xrightarrow{h_{0}(\vb{k})} \mathrm{Im}_{\Cee}(\widehat{H}(\vb{k})) \to 0 \label{CExact}
   \end{equation}
is exact.
\end{lemma}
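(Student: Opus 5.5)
Lemma~\ref{CElemma} follows from Lemma~\ref{ConstRank} by induction and a dimension count at each node of~(\ref{CExact}). Fix $\vb{k}\in\Tee^2$. Exactness of~(\ref{CExact}) means three things: surjectivity of $h_0(\vb{k})$ onto $\mathrm{Im}_\Cee(\ft H(\vb{k}))$, the equality $\mathrm{Im}_\Cee(h_{j+1}(\vb{k}))=\Ker_\Cee(h_j(\vb{k}))$ for $j=0,1,2$, and injectivity of $h_2(\vb{k})$, i.e.\ $\Ker_\Cee(h_2(\vb{k}))=\mathrm{Im}_\Cee(h_3(\vb{k}))=0$.

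Surjectivity at the right end is immediate, since the target is by definition the image of $h_0(\vb{k})=\ft H(\vb{k})$. For the middle nodes, evaluating the identity $h_jh_{j+1}=0$ of matrices over $R$ at $X=e^{ik_1},Y=e^{ik_2}$ gives $h_j(\vb{k})h_{j+1}(\vb{k})=0$. Hence $\mathrm{Im}_\Cee(h_{j+1}(\vb{k}))\subseteq\Ker_\Cee(h_j(\vb{k}))$, and it suffices to show the two spaces have equal dimension.

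For this I run the induction. The gap assumption gives constant rank $M_0=M$ of $h_0(\vb{k})$. Lemma~\ref{ConstRank} with $j=0$ then gives $\mathrm{Rank}_\Cee h_1(\vb{k})=N-M$, which is constant and equals $\mathrm{Null}_\Cee h_0(\vb{k})$. This is exactness at $\Cee^N$, and it also supplies the hypothesis for the next step. Applying the lemma with $j=1$ gives $\mathrm{Rank}\,h_2(\vb{k})=N_1-M_1=\mathrm{Null}\,h_1(\vb{k})$, which is exactness at $\Cee^{N_1}$. Applying it once more with $j=2$, where $h_3=0$, gives $0=\mathrm{Rank}\,h_3(\vb{k})=N_2-M_2$. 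So $h_2(\vb{k})$ has full column rank and is injective, which is exactness at $\Cee^{N_2}$.

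The last step is the one that needs care: one must check that Lemma~\ref{ConstRank} really applies at the final node with $h_3=0$. It does, because its hypothesis only concerns the constancy of $\mathrm{Rank}\,h_2(\vb{k})$, which the previous step provides, and its conclusion is a statement about the kernel of $h_2(\vb{k})$. In that application the lemma is used to rule out that evaluation creates new kernel in $h_2(\vb{k})$ that is absent over $R$. All the genuine content is therefore carried by Lemma~\ref{ConstRank}, and its proof is the main obstacle. The present lemma only assembles the rank equalities node by node.
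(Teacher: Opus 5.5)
Your proposal is correct and follows the paper's argument. You start from the constant rank $M$ supplied by the gap and apply Lemma~\ref{ConstRank} inductively. You then combine the resulting equality $\mathrm{Rank}_\Cee(h_{j+1}(\vb{k}))=\mathrm{Null}_\Cee(h_j(\vb{k}))$ with the inclusion $\mathrm{Im}_\Cee(h_{j+1}(\vb{k}))\subseteq\mathrm{Ker}_\Cee(h_j(\vb{k}))$, which comes from evaluating $h_jh_{j+1}=0$. Your explicit treatment of the last node with $h_3=0$, which gives $N_2=M_2$ and hence injectivity of $h_2(\vb{k})$, just spells out what the paper's ``for all $j$'' covers.
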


With this, we can prove the triviality of $E$ and hence the theorem. By~\eqref{CExact}, $\Ker_\Cee(h_2(\vb{k}))= \{0\}$ for all $\vb{k}$ and so the columns of $h_2(\vb{k})$ are linearly independent at each $\vb{k}$, so they form a global smooth frame since $\vb{k}\mapsto h_2(\vb{k})$ is real-analytic. Hence $\bigsqcup_{\vb{k}\in\Tee^2}\,\mathrm{Im}_\Cee(h_2(\vb{k}))$  is a trivial bundle; in particular its first Chern number vanishes. With this,
$$\bigsqcup_{\vb{k}\in\Tee^2} \Cee^{N_1}=\bigsqcup_{\vb{k}\in\Tee^2} \,\Ker_\Cee(h_1(\vb{k})) \oplus \big(\Cee^{N_1} / \Ker_\Cee(h_1(\vb{k}))\big)=\bigsqcup_{\vb{k}\in\Tee^2} \,\mathrm{Im}_\Cee(h_2(\vb{k})) \oplus  \mathrm{Im}_\Cee(h_1(\vb{k})),$$
and the additivity of Chern number implies 
$$0=\mathrm{c}_1\bigg(\bigsqcup_k \Cee^{N_1}\bigg)=0+\mathrm{c}_1\Big(\mathrm{Im}_\Cee(h_1(\vb{k}))\Big).$$
Since the first Chern number completely classifies complex vector bundles over a closed two-dimensional manifold, we conclude that $\bigsqcup_{\vb{k}} \mathrm{Im}_\Cee(h_1(\vb{k}))$ is trivial. Repeating this argument but starting with $h_1$ yields that $E=\bigsqcup_{\vb{k}\in\Tee^2}\, \mathrm{Im}_\Cee(\widehat{H}(\vb{k})) =\bigsqcup_{\vb{k}\in\Tee^2}\, \mathrm{Im}_\Cee(h_0(\vb{k}))$ is trivial. \teop
\vspace{4mm}

\subsection{Proof of Lemma \ref{ConstRank}}

Let $\bbF$ be the field of fractions of $R$.

We note that if $p\in R$ is such that $p(\vb{k}) = 0$ for all $\vb{k}\in \Tee^2$, then $p=0$ in $R$. Let $A\in M_{p\times q}(R)$. If $\mathrm{Rank}_\bbC(A(\vb{k}))=s$, then some $s\times s$ minor of $A(\vb{k})$ is
nonzero. That number is the evaluation of the corresponding $s\times s$
minor of $A$, which is therefore a nonzero element of $R$; hence $\mathrm{Rank}_\bbF(A)\ge s$, namely
\begin{equation*}
\mathrm{Rank}_\bbC(A(\vb{k}))\leq \mathrm{Rank}_\bbF(A).
\end{equation*}
Moreover, if $\mathrm{Rank}_\bbC(A(\vb{k}))\leq r$ for all $\vb{k}\in\Tee^2$, then every $(r+1)\times(r+1)$ minor of $A$ is an element of $R$ whose evaluation vanishes at
every point of $\bbT^2$. By opening remark, it is the zero element of $R$. Hence $\mathrm{Rank}_\bbF(A)\le r$. Summarizing this,

\begin{prop}\label{prop:rank}
Let $A\in M_{p\times q}(R)$ and suppose that $\vb{k}\mapsto\mathrm{Rank}_\bbC(A(\vb{k}))$
is constant on $\bbT^2$. Then
\begin{equation*}
\mathrm{Rank}_\bbF(A) = \mathrm{Rank}_\bbC\bigl(A(\vb{k})\bigr)
   \qquad\text{for every }\vb{k}\in\bbT^2 .
\end{equation*}
\end{prop}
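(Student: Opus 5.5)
The proposition follows by combining the two inequalities established in the paragraph just before it. Fix $s$ with $\mathrm{Rank}_\bbC(A(\vb{k}))=s$ for all $\vb{k}\in\bbT^2$, and recall that the rank over the field $\bbF$ is the largest size of a nonzero minor of $A$. I will prove $\mathrm{Rank}_\bbF(A)\geq s$ and $\mathrm{Rank}_\bbF(A)\leq s$ separately, then conclude by equality.

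For the lower bound, pick any $\vb{k}_0\in\bbT^2$. Some $s\times s$ minor of $A(\vb{k}_0)$ is nonzero. Taking minors commutes with evaluation, since a determinant is a polynomial in the entries and evaluation $R\to\bbC$, $X\mapsto e^{ik_1}$, $Y\mapsto e^{ik_2}$, is a ring homomorphism. The corresponding $s\times s$ minor of $A$, an element of $R\subseteq\bbF$, therefore evaluates to a nonzero number. Hence it is nonzero in $R$, and so $\mathrm{Rank}_\bbF(A)\geq s$. This step uses the rank at a single point only.

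For the upper bound I use that the rank is at most $s$ at every point. Every $(s+1)\times(s+1)$ minor $m\in R$ of $A$ then satisfies $m(\vb{k})=0$ for all $\vb{k}\in\bbT^2$. By the opening remark, a Laurent polynomial vanishing on the whole torus is zero. Concretely, $m(e^{ik_1},e^{ik_2})=\sum\beta_{m,n}e^{i(mk_1+nk_2)}$ is a trigonometric polynomial that vanishes identically, and uniqueness of Fourier coefficients forces every $\beta_{m,n}=0$. Thus all $(s+1)\times(s+1)$ minors vanish in $\bbF$, giving $\mathrm{Rank}_\bbF(A)\leq s$. If $s=\min(p,q)$ there are no such minors and the bound is trivial.

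Combining the two bounds gives $\mathrm{Rank}_\bbF(A)=s=\mathrm{Rank}_\bbC(A(\vb{k}))$ for every $\vb{k}$. There is no real obstacle here. The only points needing care are that $R$ is an integral domain, so that its fraction field $\bbF$ exists and the determinantal rank coincides with the usual rank over $\bbF$, and the injectivity of evaluation on the full torus. Both are standard.
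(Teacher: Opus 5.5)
Your proof is correct and follows the paper's own argument: a nonzero $s\times s$ minor at a single point gives $\mathrm{Rank}_\bbF(A)\ge s$, and every $(s+1)\times(s+1)$ minor vanishes on all of $\bbT^2$, hence is zero in $R$, which gives $\mathrm{Rank}_\bbF(A)\le s$. The only blemish is cosmetic: you use $m$ both for the minor and as a summation index in $\sum\beta_{m,n}e^{i(mk_1+nk_2)}$.
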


\begin{cor}\label{cor:H}
Let $H$ be a gapped translation-invariant $\mathrm F^4$ Hamiltonian and $M=\mathrm{Rank}_\bbC
\ft H(\vb{k})$. Then
$\mathrm{Rank}_\bbF(h_0)=\mathrm{Rank}_\bbF(\ft H)=M$.
\end{cor}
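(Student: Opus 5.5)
This follows directly from Proposition~\ref{prop:rank}, applied with $A=\ft H$, viewed as an $N\times N$ matrix with entries in $R$. Since $H$ has finite range, every entry of $\ft H(\vb{k})$ is a trig-polynomial, that is, the evaluation at $X=e^{ik_1}$, $Y=e^{ik_2}$ of an element of $R$. So $\ft H\in M_{N\times N}(R)$, and $h_0=\ft H$ by the definition of the resolution~\eqref{Syzygy}. The only hypothesis of Proposition~\ref{prop:rank} left to check is that $\vb{k}\mapsto\mathrm{Rank}_\bbC(\ft H(\vb{k}))$ is constant on $\Tee^2$.

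That constancy comes from the gap assumption. By Proposition~\ref{prop:flat_band}, $\ft H(\vb{k})\geq 0$ and $\Ker\ft H(\vb{k})\neq\{0\}$ for every $\vb{k}$. Because $\widehat{H}$ is a multiplication operator, its spectrum is the closure of $\bigcup_{\vb{k}}\sigma(\ft H(\vb{k}))$. The gap assumption therefore gives $\sigma(\ft H(\vb{k}))\subseteq\{0\}\cup(\gamma,\infty)$ for every $\vb{k}$: each fibre eigenvalue depends continuously on $\vb{k}$, so one lying in $(0,\gamma)$ at some $\vb{k}_0$ would stay there on an open set, and that interval would meet the spectrum of $H$.

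Now let $P(\vb{k})$ be the Riesz projection onto the eigenvalue $0$, taken along a fixed circle of radius $\gamma/2$ around $0$. The map $\vb{k}\mapsto\ft H(\vb{k})$ is continuous and no eigenvalue ever meets the contour, so $P(\vb{k})$ is continuous. Hence $\dim\Ker\ft H(\vb{k})=\tr P(\vb{k})$ is a continuous integer-valued function on the connected torus $\Tee^2$, and so it is constant. It follows that $\mathrm{Rank}_\bbC\ft H(\vb{k})=N-\dim\Ker\ft H(\vb{k})$ is constant; this common value is the $M$ of~\eqref{ConstRank}.

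Proposition~\ref{prop:rank} then gives $\mathrm{Rank}_\bbF(\ft H)=\mathrm{Rank}_\bbC(\ft H(\vb{k}))=M$, and since $h_0=\ft H$ this is exactly the claim. The only step that is more than bookkeeping is passing from the operator-level gap to the pointwise fibre statement. It needs only continuity of the fibres, with no analyticity, so I do not expect a real obstacle.
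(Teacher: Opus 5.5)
Your proposal is correct and follows the paper's proof, which is the one-line application of Proposition~\ref{prop:rank} to $h_0=\ft H$, with constancy of $\mathrm{Rank}_\bbC \ft H(\vb{k})$ taken from the gap assumption. You additionally spell out why the gap gives constant fibre rank, via the continuous Riesz projection already used in the proof of Proposition~\ref{prop:flat_band}, which the paper only asserts.
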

\begin{proof}
Immediate from Proposition~\ref{prop:rank} since the rank of $\ft H(\vb{k})$ is constant by the gap assumption.
\end{proof}

A priori, exactness of (5.4) only gives $\operatorname{Im}_\bbC(h_1(\vb{k}))\subseteq
\Ker_\bbC(\ft{H}(\vb{k}))$, that is
\begin{equation}\label{eq:apriori}
   \mathrm{Rank}_\bbC\bigl(h_1(\vb{k})\bigr)\;\le\;\mathrm{Null}_\bbC\bigl(\ft{H}(\vb{k})\bigr)\;=\;N-M .
\end{equation}
In order to prove that the inequality is in fact an equality, we now construct $N-M$ linearly independent vectors of $\Ker_\bbC(\ft{H}(\vb{k}))$ that are evaluations of
elements of $\Ker_R(\ft{H})$.

Fix $\vb{k}\in\bbT^2$. Since $\mathrm{Rank}_\bbC(\ft{H}(\vb{k}))=M$, there are (by row reduction) an invertible $V\in M_{N\times
N}(\bbC)$ and a permutation matrix $P\in M_{N\times N}(\bbC)$ such that
\begin{equation}\label{eq:one}
   V\ft{H}(\vb{k})P =
   \begin{pmatrix}\mathbb I_M & g(\vb{k})\\[2pt] 0 & 0\end{pmatrix}.
\end{equation}
Note that the permutation is needed here to bring the $M$ linearly independent columns of $\ft{H}(\vb{k})$ to the left. Applying the same $V$ and $P$ to $\ft{H}$ as a matrix over $R$ and splitting into the same blocks,
\begin{equation}\label{eq:two}
   V\ft{H}P =
   \begin{pmatrix} i & g\\ f & j\end{pmatrix},
   \qquad i\in M_{M\times M}(R),
\end{equation}
so that \eqref{eq:one} reads $i(\vb{k})=\mathbb I_M$, $f(\vb{k})=0$, $j(\vb{k})=0$. In particular,
\[
   \Delta=\det i\in R \quad\text{satisfies}\quad \Delta(\vb{k})=\det\mathbb I_M=1 ,
\]
hence $\Delta\ne 0$ in $R$ and $i$ is invertible over $\bbF$. Row reduction over $\bbF$ then
gives, with $G=i^{-1}g$ and
$U=\left(\begin{smallmatrix} i^{-1}&0\\ -f\,i^{-1}&\mathbb I\end{smallmatrix}\right)
\in\mathrm{GL}_N(\bbF)$,
\begin{equation}\label{eq:three}
   UV\ft{H}P =
   \begin{pmatrix}\mathbb I_M & G\\[2pt] 0 & \;j-f\,i^{-1}g\end{pmatrix} .
\end{equation}
Since $U,V,P$ are invertible, the rank over $\bbF$ of the right-hand side equals
$\mathrm{Rank}_\bbF(\ft{H})$, which is $M$ by Corollary~\eqref{cor:H}. The first $M$ rows of \eqref{eq:three} are
already $\bbF$-independent, hence $\mathrm{Rank}_\bbF(j-f\,i^{-1}g)=0$, that is $j-f\,i^{-1}g=0$ and 
\begin{equation}\label{eq:four}
   UV\ft{H}P=\begin{pmatrix}\mathbb I_M & G\\ 0&0\end{pmatrix}.
\end{equation}

Let $e_1,\dots,e_{N-M}$ be the standard basis of $R^{N-M}$ and set for each $l=1,\dots,N-M$, 
\[
   W_l = \Delta P\begin{pmatrix} G e_l\\ -e_l\end{pmatrix}.
\]
which are elements of $R^N$. By \eqref{eq:four},
\[
   UV\ft{H}P\begin{pmatrix} G e_l\\ -e_l\end{pmatrix} =0 ,
\]
and since $U,V$ are invertible, $\ft{H}\,W_l=0$, namely $W_l\in\Ker_R(\ft{H})$.

Evaluating at $\vb{k}$ and using $\Delta(\vb{k})=1$, $i(\vb{k})=\mathbb I_M$, hence $G(\vb{k})=g(\vb{k})$,
\[
   W_l(\vb{k})\;=\;P\begin{pmatrix} g(\vb{k})e_l\\ -e_l\end{pmatrix},
\]
and $\{W_l(\vb{k})\}_{l=1}^{N-M}$ are $\bbC$-linearly independent because their lower blocks
$-e_l$ are, and $P$ is invertible. These are the $N-M$ linearly independent elements in $\Ker_\Cee(\ft H(\vb{k}))$.

Finally, by exactness of (5.4), $\Ker_R(\ft{H})=\operatorname{Im}_R(h_1)$, so for each $l$, there is $\psi_l\in R^{N_1}$ such that $W_l=h_1\psi_l$, and in turn $W_l(\vb{k})=h_1(\vb{k})\psi_l(\vb{k})\in
\mathrm{Im}_\bbC(h_1(\vb{k}))$. Hence $\mathrm{Rank}_\bbC(h_1(\vb{k}))\ge N-M$, and with the a priori
bound~\eqref{eq:apriori},
\[
   \mathrm{Rank}_\bbC\bigl(h_1(\vb{k})\bigr)\;=\;N-M \qquad\text{for every }\vb{k}\in\bbT^2 .
\]
This is Lemma~\ref{ConstRank} for $j=0$.

The argument used only that $h_0=\ft{H}$ is a matrix with entries in $R$ whose pointwise rank is constant on $\bbT^2$. Since Proposition~\ref{prop:rank} is stated for an arbitrary rectangular matrix over $R$, it applies similalry to $h_j\in M_{N_{j-1}\times N_j}(R)$, with $M$ replaced by $M_j$ and
$N$ by $N_j$ throughout \eqref{eq:one}--\eqref{eq:four}. The lemma follows by induction on $j$.
\qed

\bigskip

\noindent{\bf Acknowledgments.} MF and SB would like to thank Wojciech De Roeck for many discussions on this and related topics. SB acknowledges financial support from NSREC of Canada. M.F. and Z.D were supported by the NSF under grant DMS-2407290.

\bigskip

\noindent{\bf AI Disclosure.} AI was used during proofreading.

\bibliographystyle{plain}
\bibliography{references.bib}

\end{document}